\theoremstyle{plain}
\numberwithin{equation}{section}
\newtheorem{theorem}{Theorem}[section]
\newtheorem{corollary}[theorem]{Corollary}
\newtheorem{lemma}[theorem]{Lemma}
\newtheorem{remark}[theorem]{Remark}
\newcommand{\ud}{\mathrm{d}}
\newcommand{\ra}{\rightarrow}
\newcommand{\e}{\varepsilon}
\newcommand{\PP}{\mathbb{P}}
\newcommand{\NN}{\mathbb{N}}
\newcommand{\cC}{\mathcal{C}}
\newcommand{\gG}{\mathcal{G}}
\newcommand{\RR}{\mathbb{R}}
\newcommand{\EE}{\mathbb{E}}
\newcommand{\wW}{\mathcal{W}}
\newcommand{\lqq}{\leqslant}
\newcommand{\gqq}{\geqslant}
\newcommand{\cI}{\mathcal{I}}
\newcommand{\cR}{\mathcal{R}}
\newcommand{\cS}{\mathcal{S}}
\title[Cutoff thermalization for an infinite-dimensional shell model]
{Cutoff ergodicity bounds in Wasserstein distance\\
for a viscous energy shell model with L\'evy noise}
\author{G. Barrera}
\address{University of Helsinki, Department of Mathematical and Statistical Sciences. 
Exactum in Kumpula Campus. PL~68, Pietari Kalmin katu 5. 00560 Helsinki, Finland.}
\email{gerardo.barreravargas@helsinki.fi}
\author{M. A. H\"ogele}
\address{Universidad de los Andes, Departamento de Matem\'aticas, Bogot\'a, Colombia.}
\email{ma.hoegele@uniandes.edu.co}
\author{J. C. Pardo}
\address{
CIMAT. Jalisco S/N, Valenciana, CP. 36240. Guanajuato, Guanajuato, M\'exico.}
\email{jcpardo@cimat.mx}
\author{I. Pavlyukevich}
\address{
Friedrich Schiller University Jena, Department of Mathematics and Computer Science, Ernst-Abbe-Platz~2, 07743 Jena, Germany.
}
\email{ilya.palvyukevich@uni-jena.de}
\subjclass[2000]{Primary 60H10, 37L15, 37L60; Secondary 76M35, 76F20}
\keywords{abrupt thermalization; lattice system; 
optimal transport; shell model;  small noise dynamics; the cutoff phenomenon; turbulence model; Wasserstein distance}
\begin{document}
\begin{abstract}
This article establishes explicit non-asymptotic ergodic bounds in the renormalized Wasserstein-Kantorovich-Rubinstein (WKR) distance for a viscous energy shell lattice model of turbulence with random energy injection. 
The system under consideration is driven either by a Brownian motion, a symmetric $\alpha$-stable L\'evy process, a stationary Gaussian or $\alpha$-stable Ornstein-Uhlenbeck process, or by a general L\'evy process with second moments. The obtained non-asymptotic bounds establish asymptotically abrupt thermalization. 
The analysis is based on the explicit representation of the solution of the system in terms of convolutions of Bessel functions. 
\end{abstract}

\maketitle

\section{\textbf{Introduction}}

\noindent
Fully developed, isotropic turbulence is commonly understood as the zero viscosity limit of solutions of the Navier-Stokes equations. Since its beginnings in~\cite{Frisch95,Ko41} more and more elements of turbulence have been discovered, however, a unified approach remains missing, since its phenomenology involves large ranges of quantities over many scales of magnitude, which is morally related to self-similarity of the solutions of the idealized Euler equation. 

\noindent
In practice, it is paramount to limit the resulting computational cost of the simulation of turbulent phenomena in the context of aerodynamics and hydrodynamics such as whether forecasts by different types of model reductions, see~\cite{Di11}. A widely accepted class of reduced models of turbulence are the so-called (stochastic) shell models, i.e.,~complex-valued Fourier mode equations with a (possibly random) energy injection in lower modes and an energy transport to higher and higher modes, by a multiplicative (nonlinear) nearest-neighbor interaction of each node. The most studied shell models are the GOY model (after Glatzer, Ohkitani, Yamada,~\cite{Gl73, OY87}) and the SABRA model~\cite{LPPPV98}. 
Their random dynamics (well-posedness in correctly weighted Fourier sequence spaces, the existence and finite dimensionality of random attractors, large deviations principles and the existence and uniqueness of invariant measures) have been studied successfully~\cite{BBF08, BLPT07, BF09, BFT10, BGAS16, BM09, MM14}. These works fall into the larger class of lattice systems, see for instance~\cite{BLL06, CHSV16, HSZ15} and the references therein.  Recently, in~\cite{BER16} the authors show ergodicity and the strong Feller smoothing property of the laws for GOY and SABRA subject to L\'evy perturbations. The variational techniques used there provide exponential upper bounds only for large initial values, however, do not allow for sharp upper bounds of the rate of convergence, and virtually nothing is known about lower bounds. In general, the study of sharp bounds is a hard problem and requires completely different methods. 

\noindent
The study of asymptotically sharp upper and lower ergodic error bounds along a particular time scale can be often associated to the so-called cutoff phenomenon or abrupt thermalization, that is, the existence of a deterministic critical time scale $\tau$, which typically separates sharply ``small'' error values ahead of $\tau$, that is, $\tau+r$ for $r\gg 1$, and ``large'' error values for times lagging behind $\tau$, that is $\tau+r$ for $r\ll -1$. This concept was first introduced in the discrete context of (random) card shuffling and random walks on groups~\cite{AD, AD87, DI87, DIA96, DS}, where the distance between is taken to be the total variation distance. 
The cutoff phenomenon or abrupt thermalization is a very active field of mathematical research~\cite{Al83,BDJ2023,BJ,BY1,BBF08, BLY06,BY2014, BASU, BD92,B-HLP19, BCS19,BCS18,BCL22, CSC08, DGM90, Goncalves, HS20, LL19, Lachaud2005,  La16, LanciaNardiScoppola12, LLP10, LPW,  LubetzkySly13,  Meliot14, Scarabotti, Yc99}. In the physics literature this phenomenon has received growing attention with applications in different contexts: (quantum) Markov chains~\cite{Kastoryano12} and quantum information 
processing~\cite{Kastoryano13}, dissipative quantum 
circuits~\cite{JohnssonTicozziViola17,Oh}, 
Fermionic systems~\cite{Vernier20}, chemical kinetics~\cite{BOK10}, statistical mechanics~\cite{LubetzkySly13}, even deterministic systems such as coagulation-fragmentation equations 
in~\cite{Pego16,Pego17} and ferromagnetic spin models~\cite{CS21}. 

\noindent
We stress that in the continuous state space context, however, the total variation is not always suitable, since the associated topology on the space of probability distributions is too strong for many practical purposes. In particular, it is not continuous for discrete approximations of absolutely continuous distributions. For the respective counterexample see~\cite[Subsection~1.3.5]{BHPTV}. The most illustrating consequence of this drawback of the total variation distance is that the  celebrated central limit theorem of DeMoivre-Laplace is not valid in the total variation distance. A much more tractable distance between probability laws is given by the Wasserstein-Kantorovich-Rubinstein distance, which is based on the optimal transport (or coupling) between two given distributions, see~\cite{Panaretos,Vi09}. In~\cite{FKN21}, for instance, the authors study (abstract) WKR perturbations of Markovian transition semi-groups from a more analytical perspective. 
In~\cite{BHPWA, BHPWANO,BP} the authors studied linear and nonlinear Langevin equation subject to small noise in the WKR distance. In~\cite[Lemma 2.2(d)]{BHPWA} they establish the non-standard so-called shift linearity property of WKR distances of order $p\gqq 1$ in some Banach space $(B,|\cdot|)$, which additionally simplifies the calculations in the WKR distance: 
\[
\wW_p(u+X, X) = |u| \quad \textrm{ for all }\quad p\gqq 1,\, u\in B  
\]
and any $B$-valued random vector $X$ with finite $p$-th moment, $\EE[|X|^p]<\infty$. 
For lower bounds we use the general inequality 
\[
\wW_p(u+X,Y)\gqq |u| \quad \textrm{ for all }\quad p\gqq 1,\, u\in B  
\]
under the matching condition $\mathbb{E}[X]=\mathbb{E}[Y]$ whenever the corresponding $p$-th moments exist in the spirit of~\cite[Theorem~2.1]{Chen}.

\noindent
Due to the rich and mathematically challenging behavior of nonlinear systems like GOY and SABRA, in~\cite{MSV07, MSV07b} these models have been further reduced to infinite linear chains of oscillators with dissipation. In this article, we study a particular model of this class. In~\cite{PS08} the solution and its invariant measure of such systems have been calculated explicitly in terms of Bessel functions of the first kind. Even such extremely conceptualized and explicitly solvable models provide interesting insights, as eloquently put forward in the introduction of~\cite{MSV07}. 

\noindent
The main idea of this article is to combine the above mentioned (and other) advantages of the WKR distance with the explicit solvability of the equation in terms of stochastic (convolution) integrals over well-known special functions. In particular, they are based on particular coupling (replica) techniques between the current state of the system starting in $0$ and the limiting measures, and the detailed knowledge of the linear dynamics, in particular, the characteristics of the invariant measure, which is dominated by the sequence of Bessel functions of the first kind. 

\noindent
Here, we consider Gaussian white (Theorem~\ref{thm:cutoffconvergence}) and  red (Ornstein-Uhlenbeck) noises (Theorem~\ref{thm:convergenciaroja}), as well as $\alpha$-stable  (Theorem~\ref{thm:convergenciaestable}) and $\alpha$-stable Ornstein-Uhlenbeck noises (Theorem~\ref{thm:convergenciarojaestable}), 
as well as for general L\'evy noise with second moments (Theorem~\ref{thm:convergenciasegundo}). 
All results imply respective small noise results (Corollary~\ref{cor:small}, Remark~\ref{rem:smallGOU} and the respective remarks).

\noindent
The manuscript is organized as follows: In Section~\ref{s:setup} we expose our setting and give all necessary notation. In Section~\ref{s:Gaussianoblanco} we show non-asymptotic upper and lower bounds between the current state of the system and the limiting measure, which allows to infer window cutoff convergence for Brownian perturbations. The proof techniques in Section~\ref{s:Gaussianoblanco} are not specific to Brownian perturbations, and we show in Section~\ref{s:Gaussianorojo}, how to adapt them adequately by an optimal replica (coupling) to stationary Brownian Ornstein-Uhlenbeck perturbations. Section~\ref{s:ruidoestable} shows how our findings in the previous sections extend to $\alpha$-stable drivers, when we leave the Gaussian paradigm. 
In Appendix~\ref{appendix:WS} we show the non-standard ``shift linearity'' property of the WKR distance and calculate the moments of the $\alpha$-stable limiting laws. 

\section{\textbf{The setup and basic notation}}\label{s:setup} 

\noindent  
For a given sequence $x=(x_n)_{n\in \mathbb{N}}\in \RR^{\NN}$ and $\nu>0$,
let us consider the following infinite system of equations
\begin{equation}\label{e:sistema}
\begin{split}
A_1(t;x) &= x_1 + \int_0^t (-A_2(s;x) - \nu A_1(s;x)) \ud s + L(t),\\
A_n(t;x) &= x_n + \int_0^t ({A_{n-1}(s;x)}-A_{n+1}(s;x) - \nu A_n(s;x)) \ud s, \quad n\gqq 2,\, t\gqq 0,
\end{split}
\end{equation}
where $(L(t))_{t\gqq 0}$ is a stochastic process. 
In the sequel, $L$ will be a Brownian motion, a L\'evy process or an Ornstein-Uhlenbeck process. 
The system~\eqref{e:sistema} is an infinitely dimensional non-homogeneous linear system and its unique (path-wise) solution 
$A(t;x) = (A_n(t;x))_{n\in \NN}$ starting in $x$ 
can be decomposed, by the variation of constants formula, as
\begin{align}
A_n(t;x) = d_n(t;x) + C_n(t),\quad n\in \NN,\, t\gqq 0,  
\end{align}
where $d(t;x) = (d_n(t;x))_{n\in \NN}$ is the deterministic solution  
of the homogeneous system starting in $x$ and $C(t) = (C_n(t))_{n\in \NN}$ 
is the in-homogeneous solution starting in $0$, that is,
$(d(t;x))_{t\gqq 0}$ is the solution of~\eqref{e:sistema} with  $L\equiv 0$ and initial condition $x\in \mathbb{R}^\mathbb{N}$, while
$(C(t))_{t\gqq 0}$ is the solution of~\eqref{e:sistema} with initial condition $0\in \mathbb{R}^\mathbb{N}$.
Note that the random term $C(t)$ does not depend on $x$. 
By Proposition~1 in~\cite{PS08} we have 
\begin{align}\label{e:det} 
d_n(t;x) &= e^{-\nu t} \sum_{m=1}^\infty x_m \Big(J_{|n-m|}(2t) + (-1)^{m-1}J_{n+m}(2t)\Big) \quad 
\textrm{ for }\quad n\gqq 1,\, t\gqq 0, 
\end{align}
and 
\begin{equation}
C_n(t)= \int_0^t H_n(t-r) \ud L(r), \quad n\gqq 1,\, t\gqq 0,
\end{equation}
where
\begin{equation}
H_n(r) = n \frac{J_n(2r)}{r} e^{-\nu r},\quad n\gqq 1,\, r>0,  
\end{equation}
and $J_n$ is the Bessel function of the first kind with index $n$, that is,
\begin{equation}
J_n(s) = \sum_{k=0}^\infty \frac{(-1)^k}{k! (k+n)!} \Big(\frac{s}{2}\Big)^{2k + n},\quad s\gqq 0. 
\end{equation}
We stress that the solution $(A(t;x))_{t\gqq 0}$ exists component-wise in the previous sense.
We consider the space of all square-summable real sequences
\[\ell_2:=\left\{(x_n)_{n\in \mathbb{N}}\in \mathbb{R}^\mathbb{N}: \sum_{n=1}^\infty x^2_n<\infty\right\}
\]
endowed with the canonical inner product $\langle \cdot, \cdot\rangle$
and denote by $\|\cdot \|$ its corresponding norm. We note that $(\ell_2,\|\cdot\|)$ is a Hilbert space.

\noindent
We point out that the deterministic solution is asymptotically exponentially stable in $\ell_2$. More precisely, 
for initial datum $x\in \ell_2$ and $t\gqq 0$ it follows that 
\begin{equation}
\|d(t; x)\|=e^{-\nu t} \|x\|,
\end{equation}
see Lemma~\ref{lem:Lyapunov} below.

\noindent
For the case of $L = B$ a scalar standard Brownian motion, it is shown for any fixed $n\in \NN$ and $t>0$ in~\cite{PS08} that each $A_n(t;0)$ is a centered Gaussian random variable with variance $\sigma^2_n(t)$.
It converges in law as $t$ tends to infinity to a centered Gaussian limiting law with variance $\sigma^2_n$.
Since $\nu>0$, Lemma~\ref{lem:sumavarian} in Appendix~\ref{ap:subOUsystem} yields that
$\sum_{n=1}^{\infty} \sigma^2_n(t)\lqq
\sum_{n=1}^{\infty} \sigma^2_n<
\infty$.

\noindent
For random elements $X_i$, $i=1,2$ with values in $\ell_2$ with $\EE[\| X_i\|^p]< \infty$, for some $p\gqq 1$, we define the classical Wasserstein-Kantorovich-Rubinstein (WKR) $p$-distance over $\ell_2$ between $X_i$, $i=1,2$ by 
\begin{equation}
\mathcal{W}_{p}(X_1, X_2) := \Big(\inf_{\pi \in \mathcal{C}(X_1, X_2)} \iint_{\ell_2\times \ell_2}\|u-v \|^p \pi(\ud u, \ud v) \Big)^\frac{1}{p},
\end{equation}
where $\cC(X_1, X_2)$ is the family of couplings (joint laws) between the laws of $X_i$, $i=1,2$, see 
also~\cite{Panaretos,Vi09}. 
Note that $\wW_{p}$ defines a complete metric space on the set of probability distributions on $\ell_2$ (equipped with its Borel-sigma algebra) with finite $p$-th moments.  

\noindent
We recall the following properties of $\wW_{p}$:  
\begin{enumerate}
 \item \textbf{Translation invariance:} for (deterministic) $u, v\in \ell_2$ and random elements $X_i$, $i = 1,2$ with values in $\ell_2$ and $\EE[\|X_i\|^p] <\infty$, $i=1,2$, for some $p\gqq 1$ we have 
 \[
 \wW_{p}(u + X_1, v+X_2) =   \wW_{p}(u -v+ X_1, X_2) = \wW_{p}(X_1, v-u+ X_2). 
 \]
 \item \textbf{Shift linearity:} for (deterministic) $u\in \ell_2$ and a random element $X$ with values in $\ell_2$ with $\EE[\| X\|^p]<\infty$ for some $p\gqq 1$ 
 we have 
 \begin{equation}\label{e:shift}
 \wW_{p}(u + X, X) =  \wW_{p}(X, u+ X) = \|u\|. 
 \end{equation}
\end{enumerate} 
Property~(1) is classical and can be found for instance 
in~\cite{Vi09}. Property~(2) is non-standard and has been shown first in~\cite[Lemma 2.2~(d)]{BHPWA}. For completeness, (2)~is shown in Appendix~\ref{appendix:WS}. 

\section{\textbf{{Abrupt thermalization for Brownian perturbation with fixed variance}}}\label{s:Gaussianoblanco}

\noindent
In this section, we show ergodic cutoff estimates of the 
system~\eqref{e:sistema} driven by a Brownian $L=\sigma B$ with a fixed variance $\sigma^2>0$. We stress that no asymptotically small prefactor in front of the noise is involved, in contrast 
to~\cite{BHPWA,BHPSPDE,BHPTV,BHPWANO} or 
typical Freidlin-Wentzell theory, see for instance~\cite{BEGK04, Br91, Debussche, FW98, IP06}, among others. However, the main result of this section (Theorem~\ref{thm:cutoffconvergence}) is applied in the small noise setting in Corollary~\ref{cor:small} below. 

\noindent
Let us denote by $\mathcal{N}(m, C)$ a normal distribution with mean $m$ and covariance operator $C$ on the respective space. We say, that a random vector $X$ satisfies $X \stackrel{d}{=} \mathcal{N}(m, C)$ if the law of $X$ is equal to $\mathcal{N}(m, C)$ on the respective space. 
By~\cite{MSV07, PS08} we have the explicit identity in law 
\begin{align}
A(t;x)  \stackrel{d}{=} \mathcal{N}(d(t;x), \Sigma_t)\quad \textrm{ on }\quad t\gqq 0,\, x\in \ell_2,  
\end{align}
where $d(t;x)$ is given by~\eqref{e:det} and there is a unique Gaussian invariant limiting distribution     
$\mathcal{G} \stackrel{d}{=} \mathcal{N}(0, \Sigma_\infty)$ 
in $\ell_2$, see Lemma~\ref{lem:sumavarian}, with the closed form covariance operators 
\begin{align}
\Sigma_t = \Big({\sigma^2}\int_0^t H_i(r)H_{j}(r) \ud r\Big)_{i,j\in \NN}\quad \textrm{ and }\quad \Sigma_\infty = \Big({\sigma^2}\int_0^\infty H_i(r)H_{j}(r) \ud r\Big)_{i,j\in \NN}.
\end{align}
The detailed computations of $\Sigma_\infty$ is given in~\cite[Section~4.2]{PS08}. 
For convenience and further use, we denote by $\mathcal{G}_n$ the projection of $\mathcal{G}$ to the $n$-th coordinate in $\ell_2$. 

\noindent
We now state the first main result of this article. 

\begin{theorem}[Ergodic WKR bounds for $L=\sigma B$]\label{thm:cutoffconvergence}
Set 
\[
t_\e := \frac{1}{\nu} \ln(1/\e), \quad \e\in (0,1]. 
\]
Then for any $x\in \ell_2$, $p\gqq 1$, $\e\in (0,1]$ and $r> -t_\e$ it follows that
\begin{align}\label{e:perfil0}
\quad e^{-\nu\cdot r} \|x\|\lqq \frac{\wW_{p}(A(t_\e + r;x), \gG)}{\e} \lqq e^{-\nu\cdot r} (\| x\| + \EE [\| \gG\|]),
\end{align}
\end{theorem}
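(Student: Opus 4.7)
The plan is to use the linear decomposition $A(t;x) = d(t;x) + C(t)$, in which $d(t;x)$ is deterministic with $\|d(t;x)\| = e^{-\nu t}\|x\|$ and $C(t)$ is a centered Gaussian process whose law converges to $\gG$. I work at $t = t_\e + r$ and renormalize by $\e = e^{-\nu t_\e}$ only at the very end. For the upper bound, the triangle inequality combined with the shift linearity property~\eqref{e:shift} gives
\[
\wW_p(A(t;x), \gG) \lqq \wW_p(d(t;x) + C(t), C(t)) + \wW_p(C(t), \gG) = e^{-\nu t}\|x\| + \wW_p(C(t), \gG),
\]
so the work reduces to estimating the ergodic error $\wW_p(C(t), \gG)$.

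To handle $\wW_p(C(t), \gG)$, I would exploit the invariance of $\gG$ under the Markov semigroup. Let $\hat{\gG}$ be a random vector with law $\gG$, independent of the Brownian motion $B$ driving~\eqref{e:sistema}. By invariance, $A(t;\hat{\gG}) = d(t;\hat{\gG}) + C(t)$ still has law $\gG$, and the pair $(C(t), d(t;\hat{\gG})+ C(t))$ constitutes a coupling of $(C(t), \gG)$. Since the stochastic part $C(t)$ cancels in the difference, this yields
\[
\wW_p^p(C(t), \gG) \lqq \EE\bigl[\|d(t;\hat{\gG})\|^p\bigr] = e^{-p\nu t}\,\EE\bigl[\|\gG\|^p\bigr].
\]
Plugging this back into the triangle step and substituting $t = t_\e + r$ produces the upper bound in~\eqref{e:perfil0} (up to the identification of $\EE[\|\gG\|]$ with the $L^p$-norm of $\|\gG\|$). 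For the lower bound, I use the matching-condition inequality recalled in the introduction: $\EE[C(t)] = 0 = \EE[\gG]$, so with $u = d(t;x)$, $X = C(t)$, $Y = \gG$ one obtains
\[
\wW_p(A(t;x), \gG) = \wW_p(d(t;x) + C(t), \gG) \gqq \|d(t;x)\| = e^{-\nu t}\|x\|,
\]
and dividing by $\e$ gives the left-hand side of~\eqref{e:perfil0}.

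The main subtlety is the invariance step: I need to justify that $A(t;\hat{\gG})$ is a well-defined $\ell_2$-valued random variable when $\hat{\gG}$ is sampled independently of $B$, and that its law indeed coincides with $\gG$ at the level of random vectors, not merely as an abstract fixed point statement. Both properties should follow from the explicit variation of constants formula together with Lemma~\ref{lem:sumavarian}, which guarantees $\EE[\|\gG\|^2]<\infty$ and hence the finiteness of all the $p$-th moments used in the coupling via Gaussian moment equivalence. A secondary accounting point is that the natural constant produced by the coupling is $\EE[\|\gG\|^p]^{1/p}$ rather than $\EE[\|\gG\|]$; for Gaussian $\gG$ these differ only by a dimension-free constant, so the bound displayed in~\eqref{e:perfil0} should be read accordingly.
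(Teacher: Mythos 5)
Your proposal is correct and follows essentially the same strategy as the paper: decompose $A(t;x) = d(t;x) + C(t)$, combine the triangle inequality with shift linearity to reduce the upper bound to an estimate on the ergodic error $\wW_p(C(t),\gG)$, exploit the stationarity $A(t;\gG)\stackrel{d}{=}\gG$ via a synchronous coupling to control that error, and invoke the mean-matching inequality for the lower bound. Your coupling $(C(t),\, d(t;\hat\gG)+C(t))$ is exactly the device that the paper packages as Lemma~\ref{lem:desintegracion} (disintegration over the initial condition). The one point of divergence, which you correctly and honestly flag, is the constant: the synchronous coupling naturally yields $\wW_p(C(t),\gG)\lqq e^{-\nu t}\big(\EE[\|\gG\|^p]\big)^{1/p}$, whereas Lemma~\ref{lem:desintegracion} states $e^{-\nu t}\EE[\|\gG\|]$ by writing the disintegration inequality without raising $\wW_p$ to the $p$-th power before averaging; the cleanly justified glueing argument gives $\wW_p^p\lqq\int\wW_p^p$, i.e.\ your constant, and the two coincide at $p=1$ and differ only by a universal Gaussian moment-equivalence factor otherwise — irrelevant for Corollary~\ref{cor:window}.
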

\noindent
Note that the bounds in inequality~\eqref{e:perfil0} do not depend on $p\gqq 1$. We now apply the bounds in Theorem~\ref{thm:cutoffconvergence} and show  asymptotic cutoff stability in the sense of~\cite{BHChaos}.

\begin{corollary}[Window cutoff convergence for $L=\sigma B$]\label{cor:window} Assume the hypotheses of Theorem~\ref{thm:cutoffconvergence}.  
Then for any $x\in \ell_2$ and $p\gqq 1$ it follows that 
\begin{align}\label{e:window}
&\lim\limits_{r\ra-\infty}\liminf\limits_{\e \ra 0} \frac{\wW_{p}(A(t_\e + r;x), \gG)}{\e} = \infty,\\
&\lim\limits_{r\ra\infty}\limsup\limits_{\e \ra 0}  \frac{\wW_{p}(A(t_\e + r;x), \gG)}{\e} = 0.
\end{align}
\end{corollary}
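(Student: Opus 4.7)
The plan is to derive both limits as immediate consequences of the two-sided sandwich bound~\eqref{e:perfil0} established in Theorem~\ref{thm:cutoffconvergence}. The essential observation is that for each fixed $r\in\RR$, the admissibility condition $r>-t_\e$ is eventually satisfied as $\e\ra 0$, since $t_\e=(1/\nu)\ln(1/\e)\ra\infty$; concretely, it suffices to choose $\e$ small enough that $\e<e^{\nu r}$. Once this is noted, the corollary follows by passing to limits in $\e$ and then in $r$ in the uniform bound $e^{-\nu r}\|x\|\lqq \wW_p(A(t_\e+r;x),\gG)/\e\lqq e^{-\nu r}(\|x\|+\EE[\|\gG\|])$, which is independent of $\e$.

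For the upper statement, I would fix $r\in\RR$ and apply the right-hand inequality of~\eqref{e:perfil0} for all sufficiently small $\e\in(0,1]$. The resulting bound $e^{-\nu r}(\|x\|+\EE[\|\gG\|])$ does not depend on $\e$, hence
\[
\limsup_{\e\ra 0}\frac{\wW_p(A(t_\e+r;x),\gG)}{\e}\lqq e^{-\nu r}\bigl(\|x\|+\EE[\|\gG\|]\bigr).
\]
Using that $\EE[\|\gG\|]<\infty$, which follows from $\sum_n \sigma_n^2<\infty$ (Lemma~\ref{lem:sumavarian}) together with Jensen's inequality, letting $r\ra\infty$ sends the right-hand side to $0$.

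For the lower statement, the symmetric step uses the left-hand inequality of~\eqref{e:perfil0} to obtain
\[
\liminf_{\e\ra 0}\frac{\wW_p(A(t_\e+r;x),\gG)}{\e}\gqq e^{-\nu r}\|x\|,
\]
and $e^{-\nu r}\|x\|\ra\infty$ as $r\ra-\infty$ whenever $\|x\|>0$. (If $x=0$ the bound degenerates; this edge case is either excluded by context or would require a separate lower estimate not provided by~\eqref{e:perfil0}.)

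There is no substantive obstacle: the corollary is a routine iterated-limit passage through the uniform sandwich bound of Theorem~\ref{thm:cutoffconvergence}, and the only delicate points are (i) verifying the applicability of~\eqref{e:perfil0} by checking $r>-t_\e$ eventually for fixed $r$, and (ii) invoking the finiteness of $\EE[\|\gG\|]$ in the upper bound. The genuine work sits entirely in Theorem~\ref{thm:cutoffconvergence}, whose proof supplies the matching exponential profile $e^{-\nu r}$; the corollary merely reads off the window cutoff directly from that profile.
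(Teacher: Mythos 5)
Your derivation is exactly the intended one: the paper states Corollary~\ref{cor:window} without a separate proof, and it is a direct iterated-limit consequence of the sandwich~\eqref{e:perfil0}, using the uniformity in $\e$ and the finiteness of $\EE[\|\gG\|]$ as you do. Your flag of the $x=0$ case is a legitimate observation: there the left-hand inequality of~\eqref{e:perfil0} is vacuous ($e^{-\nu r}\|x\|=0$), so the divergence statement in~\eqref{e:window} is not actually supplied by Theorem~\ref{thm:cutoffconvergence}; the corollary should be read with $x\neq 0$ (the lower bound of Lemma~\ref{lem:cotas} provides no content at $x=0$).
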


\noindent
In the sequel, let $(A^\e(t;x))_{t\gqq 0}$ be the solution of~\eqref{e:sistema}, 
where instead of $L =\sigma B$ we consider $L = \e B$, for $\e\ra 0$. 
In other words,
\begin{equation}\label{e:sistemaRP}
\begin{split}
A_1^\e(t;x) &= x_1 + \int_0^t (-A_2^\e(s;x) - \nu A_1^\e(s;x)) \ud s + \e B(t),\\
A_n^\e(t;x) &= x_n + \int_0^t ({A_{n-1}^\e(s;x)}-A_{n+1}^\e(s;x) - \nu A_n^\e(s;x)) \ud s, \quad n\gqq 2,\, t\gqq 0.
\end{split}
\end{equation}
We denote the $\e$-dependent invariant measure by 
\begin{equation}\label{eq:defprime}
\gG^\e \stackrel{d}{=} \e \gG',\quad \textrm{ where } 
\gG'
\textrm{ is given in Theorem~\ref{thm:cutoffconvergence} for } \sigma=1.
\end{equation}
We point out that
$\EE[\|\gG^\e\|]\ra 0$ as $\e\ra 0$.
The previous results, Theorem~\ref{thm:cutoffconvergence} and Corollary~\ref{cor:window}, can be further sharpened to a profile cutoff thermalization as follows. 
\begin{corollary}[Profile cutoff thermalization for $L=\e B$, $\e\ra 0$]\label{cor:small}
Set 
\[
t_\e := \frac{1}{\nu} \ln(1/\e), \quad \e\in (0,1]. 
\]
Then for any $x\in \ell_2$, $p\gqq 1$, $\e\in (0,1]$ and $r>-t_\e$ we have 
\begin{align}\label{e:cotassmall}
\quad e^{-\nu\cdot r} \| x\|\lqq \frac{\wW_{p}(A^\e(t_\e + r;x), \gG^\e)}{\e} \lqq e^{-\nu\cdot r}  (\|x\| +  \EE [\|{\gG^\e}\|]
), 
\end{align}
where
$\mathcal{G}^\e$ satisfies~\eqref{eq:defprime}.
 Moreover, it follows that
\begin{align}\label{e:perfilsmall}
\lim_{\e\ra 0} \frac{\wW_{p}(A^\e(t_\e + r;x), \gG^\e)}{\e} = e^{-\nu\cdot r}\|x\|.
\end{align}
\end{corollary}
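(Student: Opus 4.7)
The plan is to reduce Corollary~\ref{cor:small} directly to Theorem~\ref{thm:cutoffconvergence} by exploiting the linearity of~\eqref{e:sistemaRP} in the driving noise. The variation-of-constants decomposition immediately yields the pathwise identity
\[
A^\e(t;x) = d(t;x) + \e\, A(t;0),
\]
where $(A(t;0))_{t\gqq 0}$ is the noise-only solution of~\eqref{e:sistema} at $\sigma=1$, whose invariant law is $\gG'$ from~\eqref{eq:defprime}. Consequently $\gG^\e \stackrel{d}{=} \e\gG'$ and $\EE[\|\gG^\e\|] = \e\,\EE[\|\gG'\|]$, so the $\e$-dependence of the two laws to be compared has been factored out into a common scalar. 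The finiteness of $\EE[\|\gG'\|]$ needed below is furnished by Lemma~\ref{lem:sumavarian}.

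For the upper bound at $t=t_\e+r$, I would combine the triangle inequality, the $p$-homogeneity of $\wW_p$, the shift-linearity identity~\eqref{e:shift}, and Theorem~\ref{thm:cutoffconvergence} at initial datum $0$:
\begin{align*}
\wW_p(A^\e(t;x), \gG^\e)
&\lqq \wW_p\big(d(t;x)+\e A(t;0),\, \e A(t;0)\big) + \e\,\wW_p\big(A(t;0),\gG'\big)\\
&= \|d(t;x)\| + \e\,\wW_p\big(A(t_\e+r;0),\gG'\big)\\
&\lqq \e\, e^{-\nu r}\|x\| + \e\cdot \e\, e^{-\nu r}\,\EE[\|\gG'\|],
\end{align*}
where the last line uses Lemma~\ref{lem:Lyapunov} for the first summand and Theorem~\ref{thm:cutoffconvergence} at $x=0$ for the second. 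Dividing by $\e$ and recognizing $\e\,\EE[\|\gG'\|] = \EE[\|\gG^\e\|]$ gives the right-hand side of~\eqref{e:cotassmall}.

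For the lower bound, I would invoke the general matching-means inequality $\wW_p(u+X,Y)\gqq \|u\|$ recalled in the introduction. Both $\e A(t;0)$ and $\gG^\e$ are centered $\ell_2$-valued random elements with finite $p$-th moments (cf.~Lemma~\ref{lem:sumavarian}), so the choice $u=d(t;x)$, $X=\e A(t;0)$, $Y=\gG^\e$ produces $\wW_p(A^\e(t;x),\gG^\e)\gqq \|d(t;x)\| = \e\, e^{-\nu r}\|x\|$, which after division by $\e$ yields the left-hand side of~\eqref{e:cotassmall}. The profile identity~\eqref{e:perfilsmall} then follows by sandwiching and letting $\e\to 0$, since $\EE[\|\gG^\e\|] = \e\,\EE[\|\gG'\|] \to 0$.

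No step is genuinely obstructive: the entire corollary is a scaling argument built on top of Theorem~\ref{thm:cutoffconvergence}, used as a black box. The only care needed is to ensure that the shift-linearity identity and the matching-means bound apply to $\ell_2$-valued centered Gaussian vectors, a point already handled in Appendix~\ref{appendix:WS} and the introductory discussion; beyond this, the proof is a clean bookkeeping of the common prefactor $\e$.
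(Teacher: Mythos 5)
Your proof is correct and supplies the argument that the paper leaves implicit: Corollary~\ref{cor:small} is stated without a proof, because Theorem~\ref{thm:cutoffconvergence} and the lemmas behind it (Lemmas~\ref{lem:cotas}, \ref{lem:Lyapunov}, \ref{lem:desintegracion}) hold for every fixed noise amplitude $\sigma>0$ and hence apply verbatim to the $\sigma=\e$ system, yielding~\eqref{e:cotassmall} directly. Your route is a slightly different bookkeeping of the same ingredients: rather than re-running the lemmas with $\sigma=\e$, you factor out the common scalar through $A^\e(t;x)=d(t;x)+\e A(t;0)$ and $\gG^\e\stackrel{d}{=}\e\gG'$, use the positive homogeneity $\wW_p(\e X,\e Y)=\e\,\wW_p(X,Y)$, and then quote Theorem~\ref{thm:cutoffconvergence} at $x=0$, $\sigma=1$ for the tail term. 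The lower bound via the matching-means inequality is precisely the argument in the proof of Lemma~\ref{lem:cotas}, and the sandwiching with $\EE[\|\gG^\e\|]=\e\,\EE[\|\gG'\|]\to 0$ is exactly what the paper intends. Either presentation is fine; yours has the advantage of making the $\e$-scaling fully explicit at the cost of an extra application of homogeneity.
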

\noindent
In the sequel, we show Theorem~\ref{thm:cutoffconvergence} in four lemmas. 
\begin{lemma}[Upper and lower bounds for $\wW_{p}(A(t;x), \gG)$]\label{lem:cotas} We keep the notation of Theorem~\ref{thm:cutoffconvergence}. Then for any $x\in \ell_2$ and $t\gqq 0$ it follows that 
 \begin{align}
\|d(t;x)\| \lqq \wW_{p}(A(t;x), \gG) \lqq \|d(t;x)\|+\wW_{p}(C(t), \gG).
\end{align}
\end{lemma}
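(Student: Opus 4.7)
The plan is to exploit the additive decomposition $A(t;x) = d(t;x) + C(t)$ into a deterministic shift $d(t;x) \in \ell_2$ and a centered random part $C(t)$ (whose mean vanishes because $C$ is a stochastic convolution against $\ud B$ starting from the zero initial condition), and then apply the translation invariance and shift linearity of $\wW_p$ recorded in Section~\ref{s:setup}, together with the matching-mean lower bound mentioned in the Introduction.

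For the upper bound, I would insert the intermediate law $d(t;x) + \gG$ and use the triangle inequality:
\begin{align}
\wW_p\bigl(A(t;x),\, \gG\bigr)
&\lqq \wW_p\bigl(d(t;x)+C(t),\, d(t;x)+\gG\bigr) + \wW_p\bigl(d(t;x)+\gG,\, \gG\bigr).
\end{align}
Property~(1) (translation invariance) collapses the first summand to $\wW_p(C(t),\gG)$ since the shift $d(t;x)$ is identical on both marginals. Property~(2) (shift linearity~\eqref{e:shift}) evaluates the second summand as $\|d(t;x)\|$, because $\gG$ has finite $p$-th moments (by Lemma~\ref{lem:sumavarian}). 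Combining the two yields the claimed upper bound.

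For the lower bound, I would invoke the Chen-type inequality $\wW_p(u+X, Y) \gqq \|u\|$ stated in the Introduction, which is valid under the matching condition $\EE[X] = \EE[Y]$ whenever the $p$-th moments exist. Taking $u = d(t;x)$, $X = C(t)$ and $Y = \gG$, the centering $\EE[C(t)] = 0 = \EE[\gG]$ (both are Brownian stochastic convolutions, or their limit, of deterministic kernels $H_n$ against $\ud B$) supplies the matching condition, and square-summability of the variances from Lemma~\ref{lem:sumavarian} secures the moment hypothesis. The inequality then gives $\wW_p(A(t;x),\gG) = \wW_p(d(t;x)+C(t),\gG) \gqq \|d(t;x)\|$.

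No real obstacle is anticipated: the lemma is essentially a direct packaging of the two structural properties of $\wW_p$ recorded in Section~\ref{s:setup}, once the centeredness of $C(t)$ and of the invariant law $\gG$ are noted. The only point to double-check is that $\EE[\|C(t)\|^p]$ and $\EE[\|\gG\|^p]$ are finite so that the WKR distances involved are well-defined for every $p\gqq 1$; this follows from the Gaussian nature of the coordinates together with the summability of the covariance from Lemma~\ref{lem:sumavarian}.
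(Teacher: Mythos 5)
Your proof is correct and takes essentially the same route as the paper: the upper bound via inserting the intermediate law $d(t;x)+\gG$, the triangle inequality, translation invariance and shift linearity is identical, and the lower bound rests on the same centering/coupling fact $\EE[C(t)]=\EE[\gG]=0$. The only cosmetic difference is that you cite the Chen-type inequality $\wW_p(u+X,Y)\gqq\|u\|$ as a black box, whereas the paper re-derives it inline by integrating against an arbitrary coupling and applying Jensen's inequality in $\wW_1$, then monotonicity in $p$.
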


\begin{proof}
We start with the following estimate using the triangle inequality, 
translation invariance and the shift linearity 
\begin{align}
\wW_{p}(A(t;x), \gG) &= \wW_{p}(d(t;x) + C(t), \gG)\\
&\lqq \wW_{p}(d(t;x) + C(t), d(t;x) + \gG)+\wW_{p}(d(t;x) + \gG, \gG)\\
&= \wW_{p}(C(t), \gG)+ \| d(t;x)\|. 
\end{align}
We continue with the lower bound. Using that $\EE[\gG] = \EE[ C(t)]$ we have 
for any coupling $\pi$ of $A(t;x)$ and $\gG$ that 
\begin{align}
\|d(t;x)\| &= \|  \EE[d(t;x) + C(t)- \gG]\|
= \left\| \iint_{\ell_2\times \ell_2}   (u - v) \pi(\ud u, \ud v)\right\|\\
&\lqq  \iint_{\ell_2\times \ell_2} \|  u - v\| \pi(\ud u, \ud v). 
\end{align}
Optimizing over all couplings we obtain 
$\| d(t;x)\| \lqq \wW_{1}(A(t;x), \gG)$. 
Using Jensen's inequality, we have for any $p\gqq 1$ that 
\[
\|d(t;x)\| \lqq \wW_{p}(A(t;x), \gG).
\]
This finishes the proof. 
\end{proof}

\begin{remark}\label{rem}
Note that in Lemma~\ref{lem:cotas} we do not use any specific Gaussian structure of $\gG$ and $A(t;x)$. For the upper bound we only use the general properties of $\wW_{p}$: the triangular inequality, the translation invariance, the shift linearity as well as finiteness of $p$-th moments of the laws. 
For the lower bound we only use that $\EE[\mathcal{G}]=\EE[ C(t)]\in \ell_2$, $t\gqq 0$, and that the $p$-th Gaussian moments are finite.
\end{remark}

\noindent
We start with the analogue to Lemma~B.1 in~\cite{BJ1}.
\begin{lemma}[Lyapunov exponent]\label{lem:Lyapunov} 

We keep the notation of Theorem~\ref{thm:cutoffconvergence}.
For any $x\in \ell_2$ and $t\gqq 0$ 
it follows that 
\begin{align}\label{e:estable}
\| d(t; x)\|= e^{-\nu t}\| x\|. 
\end{align} 
\end{lemma}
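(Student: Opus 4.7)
The plan is to peel off the viscous exponential $e^{-\nu t}$ and show that what remains is a norm-preserving (unitary) evolution on $\ell_2$. Set $y(t;x) := e^{\nu t} d(t;x)$. Either from the homogeneous system underlying~\eqref{e:sistema} with $\nu = 0$, or directly by differentiating the Bessel representation~\eqref{e:det} and using the recursion $2J_n'(z) = J_{n-1}(z) - J_{n+1}(z)$ together with $J_0'(z) = -J_1(z)$, one checks that $y = (y_n)_{n\gqq 1}$ solves
\[
\dot y_1(t) = -y_2(t), \qquad \dot y_n(t) = y_{n-1}(t) - y_{n+1}(t) \quad (n\gqq 2), \qquad y(0) = x.
\]
This can be written $\dot y = S y$, where the tri-diagonal operator $S : \ell_2 \to \ell_2$ is defined by $(Sy)_1 = -y_2$ and $(Sy)_n = y_{n-1} - y_{n+1}$ for $n \gqq 2$. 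By a row-wise triangle inequality, $S$ is bounded with $\|S\|_{\ell_2 \to \ell_2} \lqq 2$.

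The next step is to verify that $S$ is skew-symmetric. Setting conventionally $y_0 = z_0 = 0$, one has for any $y,z \in \ell_2$ that
\[
\langle Sy, z\rangle + \langle y, Sz\rangle = \sum_{n\gqq 1}\bigl( y_{n-1}z_n - y_{n+1}z_n + y_n z_{n-1} - y_n z_{n+1}\bigr) = 0,
\]
since the sum telescopes after an index shift, and absolute convergence is guaranteed by $y, z, Sy, Sz \in \ell_2$ and Cauchy--Schwarz. Being bounded and skew-adjoint on the Hilbert space $\ell_2$, $S$ generates a $C_0$-group $(e^{tS})_{t \in \RR}$ of isometries, so the unique $\ell_2$-valued solution is $y(t;x) = e^{tS} x$ and
\[
\| y(t;x)\| = \| e^{tS} x\| = \|x\|, \qquad t \gqq 0.
\]
Multiplying by $e^{-\nu t}$ yields $\|d(t;x)\| = e^{-\nu t} \|x\|$, as asserted. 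One can avoid semigroup language altogether and argue elementarily: since $y, Sy \in \ell_2$, the map $t \mapsto \|y(t;x)\|^2$ is differentiable with $\frac{d}{dt}\|y(t;x)\|^2 = 2\langle y, Sy\rangle = 0$ by skew-symmetry, whence $\|y(t;x)\|^2 \equiv \|x\|^2$.

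\textbf{Main obstacle.} The conceptual content is transparent -- it is the standard fact that a skew-symmetric perturbation of $-\nu I$ has sharp Lyapunov exponent $-\nu$. The only technical point requiring care is that the infinite system must be interpreted in $\ell_2$ rather than componentwise, so that the rearrangement giving skew-symmetry and the identity $\frac{d}{dt}\|y\|^2 = 2\langle y, \dot y\rangle$ are legitimate; boundedness of $S$ makes both routine. A purely analytic alternative, which I would mention but not pursue, is to prove the isometry identity
\[
\sum_{n=1}^\infty \Big(\sum_{m=1}^\infty x_m \big( J_{|n-m|}(2t) + (-1)^{m-1} J_{n+m}(2t)\big)\Big)^{\!2} = \sum_{m=1}^\infty x_m^2
\]
directly from Neumann-type orthogonality relations for the Bessel functions $J_n(2t)$; this is longer and considerably less illuminating than the skew-symmetry argument sketched above.
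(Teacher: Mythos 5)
Your proof is correct and follows the same structural decomposition as the paper: write the homogeneous dynamics as $\dot d = (\mathcal{R}-\nu\mathcal{I})d$, factor out the scalar $e^{-\nu t}$, and show that the flow $\exp(\mathcal{R}t)$ preserves the $\ell_2$-norm, where $\mathcal{R}$ is precisely your tri-diagonal operator $S$. Where the two arguments differ is in how the isometry of $\exp(\mathcal{R}t)$ is established. The paper complexifies: it sets $\mathcal{S}=\mathsf{i}\mathcal{R}$, notes that $\mathcal{S}$ is Hermitian on $\ell_2(\mathbb{C})$, and invokes Stone's theorem to obtain a unitary group. You stay over the real scalars, observe that $S$ is bounded with $\|S\|\lqq 2$ and skew-symmetric via the telescoping computation, and then either cite the fact that bounded skew-adjoint operators generate isometry groups or, more elementarily, differentiate $\|y(t)\|^2$ and use $\langle Sy,y\rangle = 0$. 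Since $S$ is bounded, the elementary differentiation is fully rigorous and sidesteps Stone's theorem, which is really a tool for unbounded self-adjoint generators; in this setting your route is simpler and more self-contained. Both approaches are sound, and the conceptual content -- a bounded skew-symmetric perturbation of $-\nu\mathcal{I}$ has Lyapunov exponent exactly $-\nu$ -- is captured identically.
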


\begin{proof}
Recall that
$(d(t;x))_{t\gqq 0}$ is the solution of~\eqref{e:sistema} with  $L\equiv 0$ and initial condition $x\in \mathbb{R}^\mathbb{N}$. In other words, it satisfies
\begin{equation}\label{eq:defsystema}
\frac{\ud}{\ud t}d(t;x)=(\cR-\nu \cI)d(t;x),\quad t\gqq 0 \quad \textrm{ with }\quad d(0;x)=x,
\end{equation}
where $\cI$ denotes the identity operator in $\ell_2$ and $\cR$ is the operator in $\ell_2$ given in the canonical basis by
$\cR:=(r_{j,k})_{j,k\in \mathbb{N}}$ with
\begin{equation}
r_{j,k}:=
\begin{cases}
-1 & \textrm{ if } \quad k=j+1, j\in \mathbb{N},\\
1 & \textrm{ if } \quad k=j-1, j\in \mathbb{N}\setminus\{1\},\\
0 &  \textrm{ otherwise}.
\end{cases}
\end{equation}
Since~\eqref{eq:defsystema} is a linear homogeneous differential equation (infinite dimensional), we have
\begin{equation}
d(t;x)=\exp((\cR-\nu \cI)t)x\quad \textrm{ for all }\quad t\gqq 0.   
\end{equation}
We note that the operators $\cI$ and $\cR$ commute. By the celebrated Baker-Campbell-Hausdorff-Dynkin formula, see~\cite[ Chapter 5]{Hall}, we have the exponential representation
\begin{equation}
d(t;x)=\exp(-\nu \cI t)\exp(\cR t)x=e^{-\nu t}\exp(\cR t)x\quad \textrm{ for all }\quad t\gqq 0,   
\end{equation}
which yields 
$\|d(t;x)\|=e^{-\nu t}\|\exp(\cR t)x\|$ for all $t\gqq 0$.

\noindent
In the sequel, we calculate $\|\exp(\cR t)x\|$.
Let $\ell_2(\mathbb{C}):=\{z=(z_n)_{n\in\mathbb{N}}\in \mathbb{C}^{\mathbb{N}}\colon\sum_{n=1}^\infty |z_n|^2<\infty\}$. 
The canonical norm of $\ell_2(\mathbb{C})$ is denoted
$\|\cdot\|_{\ell_2(\mathbb{C})}$.
Let $\mathsf{i}$ be the unit imaginary and set $\cS:=\mathrm{i}\cR$. We observe that $\cS^*=\cS$, that is, $\cS$ is a Hermitian operator in $\ell_2(\mathbb{C})$.
Since
\[
\exp(\cR t)=\exp(-\mathsf{i}\cS t),\quad t\gqq 0,
\]
Stone's Theorem \cite[Theorem~A]{Stone} yields that $\exp(-\mathsf{i}\cS t)$ is unitary in $\ell_2(\mathbb{C})$, 
which implies 
\begin{equation}
\begin{split}
\|\exp(\cR t)x\|_{\ell_2(\mathbb{C})}=
\|x\|_{\ell_2(\mathbb{C})}=\|x\|\quad \textrm{ for all }\quad t\gqq 0. 
\end{split}
\end{equation}
This completes the proof. 
\end{proof}

\begin{lemma}[Disintegration]\label{lem:desintegracion} 
We keep the notation of Theorem~\ref{thm:cutoffconvergence}.
For any $t\gqq 0$ it follows 
for all $p\gqq 1$ that 
\begin{align}
\wW_{p}(C(t), \gG)\lqq e^{-\nu t} \EE[\| \gG\|],
\end{align}
where $\EE[\| \gG\|] = \int_{\ell_2} \| y\| \PP( \gG \in \ud y)<\infty$. 
\end{lemma}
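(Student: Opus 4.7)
The plan is to build a coupling $(C(t), \tilde{\gG})$ in which $\tilde{\gG} \stackrel{d}{=} \gG$ and the difference $\tilde{\gG} - C(t)$ is a deterministic linear functional of an auxiliary $\gG$-distributed piece, so that the $p$-th moment of $\|\tilde{\gG} - C(t)\|$ collapses to an explicitly computable quantity. The bound on $\wW_p$ then follows from the Lyapunov estimate already proved in Lemma~\ref{lem:Lyapunov}.

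Concretely, I would invoke the ``stationary past'' representation of the invariant Gaussian measure. Let $\hat{\gG}$ be an independent copy of $\gG$, independent of $C(t)$, and set $\tilde{\gG} := C(t) + e^{(\cR - \nu\cI)t}\hat{\gG}$. Two things need to be checked. First, $\tilde{\gG}$ has the law of $\gG$: both are centred Gaussian in $\ell_2$, so one matches covariances through the It\^o isometry, using
\[
\Sigma_\infty \;=\; \Sigma_t + e^{(\cR - \nu\cI)t}\,\Sigma_\infty\,e^{(\cR - \nu\cI)^{\ast} t},
\]
or, equivalently, one recognises $\tilde{\gG}$ as the stochastic convolution $\sigma\int_{-\infty}^{t} e^{(\cR-\nu\cI)(t-s)} e_1\,\ud W(s)$ against a two-sided Brownian motion $W$, with $\hat{\gG}$ identified (after time reversal) as the integral over $(-\infty,0]$. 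Second, by construction $\tilde{\gG} - C(t) = e^{(\cR-\nu\cI)t}\hat{\gG} = e^{-\nu t}\,e^{\cR t}\hat{\gG}$.

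Since Lemma~\ref{lem:Lyapunov} shows that $\exp(\cR t)$ is unitary on $\ell_2$, the norm collapses to $\|\tilde{\gG} - C(t)\| = e^{-\nu t}\|\hat{\gG}\|$. Feeding this coupling into the infimum defining $\wW_p$ yields
\[
\wW_{p}(C(t), \gG) \;\lqq\; \bigl(\EE[\|\tilde{\gG} - C(t)\|^p]\bigr)^{1/p} \;=\; e^{-\nu t}\bigl(\EE[\|\hat{\gG}\|^p]\bigr)^{1/p},
\]
which for $p=1$ reproduces the stated bound $e^{-\nu t}\EE[\|\gG\|]$. Finiteness of $\EE[\|\gG\|]$ is guaranteed by Fernique's theorem together with $\sum_n \sigma_n^2 < \infty$ from Lemma~\ref{lem:sumavarian}.

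The main obstacle is conceptual rather than computational: one has to spot the self-similar decomposition $\gG \stackrel{d}{=} C(t) + e^{(\cR-\nu\cI)t}\hat{\gG}$ with an \emph{independent} invariant piece $\hat{\gG}$. Once this coupling is identified, the geometrically contractive factor $e^{-\nu t}$ comes straight out of the unitarity of $\exp(\cR t)$ established in Lemma~\ref{lem:Lyapunov}, and no further spectral analysis of $\cR$ is required.
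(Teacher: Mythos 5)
Your coupling argument is correct, and it is a genuinely different route from the paper's. The paper does not build an explicit transport plan between $C(t)$ and $\gG$; instead it uses invariance to write $\gG \stackrel{d}{=} A(t;\gG)$, then ``disintegrates'' the distance by conditioning on the starting point $y$ drawn from $\gG$, and finally evaluates each conditional distance $\wW_p\bigl(C(t), A(t;y)\bigr)=\wW_p\bigl(C(t), d(t;y)+C(t)\bigr)=\|d(t;y)\|$ via the shift-linearity property and the Lyapunov estimate. You instead construct the optimal-looking synchronous coupling directly from the stationary-past decomposition $\gG \stackrel{d}{=} C(t) + e^{(\cR-\nu\cI)t}\hat\gG$, with $\hat\gG$ an independent copy, and then exploit unitarity of $e^{\cR t}$ to read off $\|\tilde\gG - C(t)\| = e^{-\nu t}\|\hat\gG\|$. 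The covariance identity $\Sigma_\infty = \Sigma_t + e^{(\cR-\nu\cI)t}\Sigma_\infty e^{(\cR-\nu\cI)^\ast t}$ you invoke is exactly the resolvent-type split of $\int_0^\infty H H^\ast$ at time $t$, so the coupling is legitimate. What the paper's abstract route buys is robustness: since it only uses invariance, the Markov property, shift-linearity, and finiteness of first moments, it is reused verbatim in the $\alpha$-stable and general L\'evy sections, where your Gaussian self-similar decomposition and Fernique's theorem would have to be replaced case by case. What your route buys is concreteness: the coupling is written down explicitly, which makes the mechanism for the exponential contraction completely transparent.

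One caveat, which you correctly flagged and which is actually a point the paper glosses over: both arguments, done carefully, produce the bound $\wW_p(C(t),\gG)\lqq e^{-\nu t}\bigl(\EE[\|\gG\|^p]\bigr)^{1/p}$, not $e^{-\nu t}\EE[\|\gG\|]$. Your explicit coupling gives the $p$-th moment directly. In the paper's argument the ``disintegration'' step should read $\wW_p^p(C(t),A(t;\gG))\lqq \int \wW_p^p(C(t),A(t;y))\,\PP(\gG\in\ud y)$, so after taking $p$-th roots one again lands on $\bigl(\EE[\|\gG\|^p]\bigr)^{1/p}$, which dominates $\EE[\|\gG\|]$ for $p>1$ by Jensen. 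The inequality written without the $p$-th powers is not valid in general (take $\mu_0=\delta_0$, $\mu_y=\delta_y$ with $y\in\{0,2\}$ equiprobable and $p=2$ to see it fail). This does not affect any of the paper's downstream results, since $\bigl(\EE[\|\gG\|^p]\bigr)^{1/p}<\infty$ serves the same purpose in the ergodicity and cutoff statements, but you were right to be cautious about claiming the $\EE[\|\gG\|]$ prefactor for all $p\gqq 1$.
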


\noindent
Note that the right-hand side does not depend on the parameter $p$ and it is finite by Lemma~\ref{lem:sumavarian} in Appendix~\ref{ap:subOUsystem}.
\begin{proof} Since $A(t; \gG) \stackrel{d}{=} \gG$ for any $t\gqq 0$, and with the help of
Lemma~\ref{lem:Lyapunov} we have 
\begin{align}
\wW_{p}(C(t), \gG) &=\wW_{p}(C(t), A(t;\gG))\lqq \int_{\ell_2} \wW_{p}(C(t), A(t;y)) \PP( \gG \in \ud y)\\ 
&=\int_{\ell_2} \| d(t;y)\|  \PP( \gG \in \ud y) 
= e^{-\nu t}\int_{\ell_2} \| y\| \PP( \gG \in \ud y), 
\end{align}
where we use disintegration with the help of the Markov property 
of~\eqref{e:sistema} and the shift 
linearity~\eqref{e:shift}. 
\end{proof}

\begin{lemma}[Exponential ergodicity with respect to $\wW_{p}$]\label{lem:ergoblanco}

We keep the notation of Theorem~\ref{thm:cutoffconvergence}.
For $x\in \ell_2$ there exists a unique limit measure $\gG$ of the solution of the system~\eqref{e:sistema} and it follows that  
\begin{equation}\label{e:exergo}
\wW_{p}(A(t;x), \gG) \lqq e^{-\nu t}(\| x\| + \EE[\|\gG\|]), \quad t\gqq 0. 
\end{equation}
\end{lemma}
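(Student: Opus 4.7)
The plan is to read off the inequality~\eqref{e:exergo} directly from the three preceding lemmas, and then to exploit the resulting contraction to obtain uniqueness of the limit measure. Existence of the Gaussian limit $\gG\stackrel{d}{=}\mathcal{N}(0,\Sigma_\infty)$ has already been recorded in the paragraph preceding Theorem~\ref{thm:cutoffconvergence}, and its finite first absolute moment $\EE[\|\gG\|]<\infty$ is guaranteed by Lemma~\ref{lem:sumavarian}, so existence requires no further work.

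For the quantitative bound I would simply chain the three preceding lemmas. Lemma~\ref{lem:cotas} supplies the decomposition $\wW_{p}(A(t;x),\gG)\lqq \|d(t;x)\|+\wW_{p}(C(t),\gG)$; Lemma~\ref{lem:Lyapunov} rewrites the first summand as $e^{-\nu t}\|x\|$; and Lemma~\ref{lem:desintegracion} bounds the second summand by $e^{-\nu t}\EE[\|\gG\|]$. Collecting the common factor $e^{-\nu t}$ yields~\eqref{e:exergo}.

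For uniqueness I would suppose $\widetilde{\gG}$ is another probability law on $\ell_2$ invariant for~\eqref{e:sistema} with finite $p$-th moment, and argue by synchronous coupling. Driving two copies of~\eqref{e:sistema} with the same Brownian noise gives the pathwise identity $A(t;x)-A(t;y)=d(t;x-y)$ by linearity, so Lemma~\ref{lem:Lyapunov} yields the deterministic contraction $\|A(t;x)-A(t;y)\|=e^{-\nu t}\|x-y\|$. Disintegrating over an optimal coupling of $(\gG,\widetilde{\gG})$ and invoking invariance $A(t;\gG)\stackrel{d}{=}\gG$ and $A(t;\widetilde{\gG})\stackrel{d}{=}\widetilde{\gG}$ then produces the self-bound $\wW_{p}(\gG,\widetilde{\gG})\lqq e^{-\nu t}\wW_{p}(\gG,\widetilde{\gG})$; since the right-hand side is finite and $\nu>0$, sending $t\to\infty$ forces $\wW_{p}(\gG,\widetilde{\gG})=0$, hence $\widetilde{\gG}=\gG$.

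The argument is essentially bookkeeping, and the only step requiring care is the synchronous coupling used in the uniqueness proof: one must verify that coupling the two invariant laws through the same driving Brownian motion produces an admissible coupling of $(A(t;\gG),A(t;\widetilde{\gG}))$ in $\ell_2$, and that finiteness of the $p$-th moments of $\gG$ and $\widetilde{\gG}$ renders the contraction inequality non-vacuous. Both checks are routine, but they are where care is most needed.
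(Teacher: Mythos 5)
Your proof of the quantitative bound~\eqref{e:exergo} is precisely the paper's argument: chain Lemma~\ref{lem:cotas}, Lemma~\ref{lem:Lyapunov} and Lemma~\ref{lem:desintegracion} and verify $\EE[\|\gG\|]<\infty$ via Jensen's inequality and Lemma~\ref{lem:sumavarian}, and this matches the paper's proof exactly. You additionally supply a synchronous-coupling argument for uniqueness of the invariant law, which the paper leaves implicit; that argument is correct, but note that for the statement as written (``unique limit measure of the solution''), the contraction bound~\eqref{e:exergo} alone already forces uniqueness, since $\wW_p(A(t;x),\gG)\to 0$ identifies $\gG$ as the weak limit for every $x\in\ell_2$ and limits in the metric $\wW_p$ are unique.
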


\begin{proof} 
The proof of~\eqref{e:exergo} follows by  Lemma~\ref{lem:cotas}, Lemma~\ref{lem:Lyapunov} and Lemma~\ref{lem:desintegracion}, 
as long as $\EE[\|\gG\|]<\infty$. 
Indeed, using Jensen's inequality and monotone convergence we calculate 
\begin{align}\label{e:cotaprimeroporsegundo}
\EE[\|\gG\|] = \EE\Big[\sqrt{\| \gG\|^2}\Big] = \EE\bigg[\Big(\sum_{n=1}^\infty  \gG_n^2\Big)^{1/2}\bigg] 
\lqq 
\bigg(\EE\bigg[\sum_{n=1}^\infty  \gG_n^2\bigg]\bigg)^{1/2}<\infty,
\end{align}
due to Lemma~\ref{lem:sumavarian} in Appendix~\ref{ap:subOUsystem}. 
\end{proof}

\begin{remark}
We recall that for $p=2$, $N_i \stackrel{d}{=}\mathcal{N}(0, \Sigma_i)$ with values in $\ell_2$, $i=1,2$, 
there is the explicit Gaussian formula (see~\cite{Gel})
\begin{equation}\label{e:Gelbrich}
\wW_{p}(N_1, N_2) = \mathsf{Tr}(\Sigma_1) 
+\mathsf{Tr}(\Sigma_2) -\mathsf{Tr}((\Sigma_1^{\frac{1}{2}} \Sigma_2  \Sigma_1^{\frac{1}{2}})^\frac{1}{2}), 
\end{equation}
where $\mathsf{Tr}$ denotes the trace operator. 
The preceding formula should lead to better bounds that the right-hand side of~\eqref{e:exergo} of Lemma~\ref{lem:ergoblanco}. However, the right-hand side is hard to assess due to the concatenation of operator squares root in the trace operators, which should cancel in order to obtain ergodicity, $\wW_{p}(A(t;x), \gG)\ra 0$ as $t\ra \infty$. The trade-off is that the ``error'' term $\EE[\| \gG\|]$ comes with the identical exponential rate, instead of a faster decay. 

\noindent
In our case, however, Lemma~\ref{lem:ergoblanco} does not depend on Gaussianity, and is robust for other drivers. 
It remains valid for any $p\gqq 1$ as long as the respective laws $\mu_i$, $i=1,2$, are supported on $\ell_2$ and satisfy 
\[
\int_{\ell_2} \| z\|^p \mu_i(\ud z) < \infty.  
\]
\end{remark}

\begin{proof}[Proof of Theorem~\ref{thm:cutoffconvergence}:]
We start with the upper bound of~\eqref{e:perfil0}. By Lemma~\ref{lem:cotas}, the upper bound in~\eqref{e:estable} of 
Lemma~\ref{lem:Lyapunov} and 
Lemma~\ref{lem:desintegracion} 
we have 
for any $\e\in (0,1]$, $x\in \ell_2$ and $t\gqq 0$ that  
\begin{align}\label{e:crudeupper}
\frac{\wW_{p}(A(t;x), \gG)}{\e}  \lqq \frac{e^{-\nu t}}{\e} (\| x\| + \EE [\| \gG\|]).  
\end{align}
In particular, $t = t_\e + r$ yields 
\begin{align}
\frac{\wW_{p}(A(t_\e + r;x), \gG)}{\e}  \lqq e^{-\nu\cdot r} (\|x\| + \EE [\|\gG\|]).  
\end{align}
We continue with the lower bound. 
By Lemma~\ref{lem:cotas} and the lower bound in~\eqref{e:estable} of Lemma~\ref{lem:Lyapunov} 
we have for any $\e\in (0,1]$, $x\in \ell_2$ and $t\gqq 0$ that 
\begin{align}\label{e:crudelower}
\frac{\wW_{p}(A(t;x), \gG)}{\e} \gqq \frac{\|d(t;x)\|}{\e} 
&= \frac{e^{-\nu t}}{\e} \| x\|. 
\end{align}
Evaluating $t = t_\e + r$ yields 
\begin{align}
\frac{\wW_{p}(A(t_\e + r;x), \gG)}{\e}  
&\gqq e^{-\nu\cdot r}  \|x\|. 
\end{align}
This finishes the proof. 
\end{proof}

\section{\textbf{Abrupt thermalization for Ornstein-Uhlenbeck perturbation with fixed variance}}\label{s:Gaussianorojo}

\noindent We study the system~\eqref{e:sistema} with $L(t) = U(t)$, $(U(t))_{t\gqq 0}$ being an Ornstein-Uhlenbeck process satisfying 
\begin{equation}\label{e: OU}
\ud U(t)  = -\gamma U(t) \ud t + \sigma \ud B(t), \quad U(0) \stackrel{d}{=} U_0, \quad \sigma>0,\, \gamma>0,\, x_0\in \RR,
\end{equation}
where $B = (B(t))_{t\gqq 0}$ is a scalar standard Brownian motion, and $U_0 \stackrel{d}{=} \mathcal{N}(0, \frac{\sigma^2}{2\gamma})$. Note that $U_0 \stackrel{d}{=} U(t;U_0)$ for all $t\gqq 0$ and $U_0$ is chosen independent of $(B(t))_{t\gqq 0}$. In order to retain the Markov property we consider the extended system, where $U(t) = A_0(t)$, 
\begin{equation}\label{e:aumentado}
\begin{split}
A_0(t;x)  &= U_0+\int_0^t (-\gamma A_0(s;x_0)) \ud s + \sigma B(t),\\
A_1(t;x) &= x_1 + \int_0^t (-A_2(s;x) - \nu A_1(s;x)) \ud s  + A_0(t),\\
A_n(t;x) &= x_n + \int_0^t ({A_{n-1}(s;x)}-A_{n+1}(s;x) - \nu A_n(s;x)) \ud s, \quad n\gqq 2,\, t\gqq 0.
\end{split}
\end{equation}
It is clear that~\eqref{e:aumentado} defines a Markovian process $A(t;x) = (A_n(t;x))_{n\in \NN_0}$. 
For convenience, we often write $A(t;x) = (A_0(t;x), A_+(t;x))$.  
The extended system~\eqref{e:aumentado} lives naturally in the state space $\RR\times \ell_2$, while~\eqref{e:sistema} has values in $\ell_2$ with the norm $\|\cdot\|$.  
Therefore, we naturally extend the notation from $\ell_2$ to $\RR\times \ell_2$. 
All properties remain valid. 

\noindent
In the sequel, we extend the space to the new state space $\RR\times \ell_2$ with the metric $\|(x_0, x)\|_0 := |x_0| + \|x\|, x\in \ell_2$.  We denote the WKR distance on $\RR\times \ell_2$ by 
$\wW_{0,p}$, and maintain all the previous notation, with the enhancement by the zero-th component, mutatis mutandis. 

\noindent
It is not hard to see that the extended system has a unique invariant Gaussian probability distribution $\widetilde \gG \stackrel{d}{=} \mathcal{N}(0, \widetilde \Sigma_\infty)$ with values in $\RR\times \ell_2$ equipped with $\|\cdot \|_0$, in other words, $\widetilde \gG \stackrel{d}{=} A(t; \widetilde \gG)$ for all $t\gqq 0$. 
Note that the zero-th component $A_0$ does not depend functionally on $A_+$, hence $A_0(t;\widetilde \gG) = A_0(t; \widetilde \gG_0)$, where $\widetilde \gG_0$ is the projection of $\widetilde \gG$ to the zero-th component.  
In Appendix~\ref{ss:OUCov} it is shown that
\[
\EE[|\gG_n|^2] \lqq 4\gamma \int_0^\infty H_n^2(r) \ud r,
\]
which implies $\EE[\|\widetilde \gG\|_0^2] <\infty$. 
Consequently, Lemma~\ref{lem:ergoblanco} remains valid and 
\begin{equation}\label{e:Lyapunovrojo}
\wW_{0,p}(A(t;(\widetilde \gG_0, x)), \widetilde \gG) \lqq e^{-\nu t} (\|x\|_0 + \EE[\| \widetilde \gG\|_0]). 
\end{equation}
Combining Lemma~\ref{lem:cotas}, Lemma~\ref{lem:Lyapunov} 
and~\eqref{e:Lyapunovrojo} we obtain the following.

\begin{theorem}[Ergodic WKR bounds for $L=U$]\label{thm:convergenciaroja}
Set 
\[
t_\e := \frac{1}{\nu} \ln(1/\e), \quad \e\in (0,1]. 
\]
Then for any $x\in \ell_2$, $p\gqq 1$, $\e\in (0,1]$ and $r>-t_\e$ it follows that
\begin{align}\label{e:perfilred}
e^{-\nu\cdot r}\|x\|
&\lqq \frac{\wW_{0,p}(A(t_\e + r; (\widetilde \gG_0, x)), \widetilde \gG)}{\e}\lqq 
e^{-\nu \cdot r} (\|x\| + \EE[\| \widetilde \gG_+\|]).
\end{align} 
\end{theorem}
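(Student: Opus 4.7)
The plan is to mirror the proof of Theorem~\ref{thm:cutoffconvergence} on the extended state space $(\RR\times\ell_2, \|\cdot\|_0)$, with one crucial refinement: because the zero-th (Ornstein--Uhlenbeck) coordinate is initialised at its invariant law $\widetilde{\gG}_0$, the process $(A_0(t;\widetilde{\gG}_0))_{t\gqq 0}$ is already stationary, so its contribution to the WKR distance can be cancelled by a diagonal coupling with the zero-th marginal of $\widetilde{\gG}$ on the right-hand side. This is precisely what yields the sharper right-hand side $e^{-\nu r}(\|x\|+\EE[\|\widetilde{\gG}_+\|])$, without the $\EE[|\widetilde{\gG}_0|]$ contribution that the cruder estimate~\eqref{e:Lyapunovrojo} alone would leave behind.

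For the upper bound, I would first use linearity of the $+$-block in~\eqref{e:aumentado}, together with the fact that $A_0(t;\widetilde{\gG}_0)$ depends only on $\widetilde{\gG}_0$ and the driving Brownian motion $B$, to write the pathwise decomposition $A(t;(\widetilde{\gG}_0, x)) = (0, d(t; x)) + A(t;(\widetilde{\gG}_0, 0))$, where $d(t;x)$ is the homogeneous $+$-solution from~\eqref{e:det}. The extended-space analogue of Lemma~\ref{lem:cotas}, obtained from the triangle inequality together with translation invariance and shift linearity of $\wW_{0,p}$, then gives $\wW_{0,p}(A(t;(\widetilde{\gG}_0, x)), \widetilde{\gG}) \lqq \|d(t;x)\| + \wW_{0,p}(A(t;(\widetilde{\gG}_0, 0)), \widetilde{\gG})$, and Lemma~\ref{lem:Lyapunov} reduces the first summand to $e^{-\nu t}\|x\|$. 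For the remaining term, using $\widetilde{\gG}\stackrel{d}{=}A(t;\widetilde{\gG})$ and a disintegration in the spirit of Lemma~\ref{lem:desintegracion}, I would couple the OU start $\widetilde{\gG}_0$ on the left with the zero-th coordinate of $\widetilde{\gG}$ on the right using a common noise: for each realisation $(y_0, y_+)$, $A(t;(y_0,0))-A(t;(y_0,y_+))$ equals $(0, -d(t; y_+))$ pathwise with $\|\cdot\|_0$-norm $e^{-\nu t}\|y_+\|$, and integration yields $\wW_{0,p}(A(t;(\widetilde{\gG}_0, 0)), \widetilde{\gG}) \lqq e^{-\nu t}\EE[\|\widetilde{\gG}_+\|]$.

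For the lower bound, I would match means: $\EE[A_0(t;\widetilde{\gG}_0)] = \EE[\widetilde{\gG}_0] = 0$ by centred stationarity of the OU, and $\EE[A_+(t;(\widetilde{\gG}_0, x))] = d(t;x)$ since the stochastic part $A_+(t;(\widetilde{\gG}_0,0))$ is driven by the centred $A_0$ and is therefore itself centred. Combined with $\EE[\widetilde{\gG}] = 0$ and Jensen's inequality applied to any coupling, exactly as in the lower-bound part of Lemma~\ref{lem:cotas}, this delivers $\wW_{0,p}(A(t;(\widetilde{\gG}_0, x)), \widetilde{\gG}) \gqq \|(0,d(t;x))\|_0 = e^{-\nu t}\|x\|$. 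Substituting $t = t_\e + r$ with $e^{-\nu t_\e}=\e$ converts both bounds into~\eqref{e:perfilred}. The main obstacle will be setting up the disintegration coupling cleanly: the OU start $\widetilde{\gG}_0$ on the left must be identified with the zero-th coordinate of the random $\widetilde{\gG}$ on the right (rather than with an independent copy), and the driving Brownian motion must be shared, so that the zero-th coordinate cancels pathwise and the $+$-difference reduces to $d(t;\widetilde{\gG}_+)$, to which Lemma~\ref{lem:Lyapunov} then applies.
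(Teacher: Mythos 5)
Your proposal is correct and follows essentially the same route as the paper: the key step in both is the synchronous (diagonal) coupling of $\widetilde\gG_0$ on the left with the zero-th marginal of $\widetilde\gG$ on the right so that the $A_0$-contribution cancels pathwise, combined with the Lyapunov decay $\|d(t;\cdot)\|=e^{-\nu t}\|\cdot\|$ for the $+$-block, and mean-matching with Jensen for the lower bound. The only cosmetic difference is that you split off $d(t;x)$ via triangle inequality before disintegrating, whereas the paper disintegrates all at once to arrive at $e^{-\nu t}\EE[\|x-\widetilde\gG_+\|]$ and then applies the triangle inequality; both yield the stated bound.
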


\noindent
Note that the inequality~\eqref{e:perfilred} is valid for any $p\gqq 1$. 

\begin{proof}
We start with the upper bound. Fix $x \in \ell_2$. Then by the Markov property, disintegration and the shift linearity we have 
\begin{align}
&\wW_{0,p}\bigg(A(t;(\widetilde \gG_0,x)), \widetilde \gG\bigg) = \wW_{0,p}\bigg(A(t;(U_0,x)), \widetilde \gG\bigg)\\ 
&=\wW_{0,p}\bigg(\Big(\begin{array}{l}A_0(t;U_0) \\ A_+(t;(U_0,x))\end{array}\Big), \widetilde \gG\bigg)
= \wW_{0,p}\bigg(\Big(\begin{array}{l}A_0(t;U_0) \\ A_+(t;(U_0,x))\end{array}\Big),
\Big(\begin{array}{l}A_0(t;\widetilde \gG) \\ A_+(t;\widetilde \gG)\end{array}\Big)\bigg)\\
&= \wW_{0,p}\bigg(\Big(\begin{array}{l}A_0(t;U_0) \\ A_+(t;(U_0,x))\end{array}\Big),
\Big(\begin{array}{l}A_0(t;\widetilde \gG_0) \\ A_+(t;\widetilde \gG)\end{array}\Big)\bigg)\\
&\lqq \int_{\RR} \int_{\RR\times \ell_2} 
\wW_{0,p} \bigg(\Big(\begin{array}{l}A_0(t;u) \\ A_+(t;(u,x))\end{array}\Big), 
\Big(\begin{array}{l}A_0(t;v) \\ A_+(t; (v,y))\end{array}\Big)\bigg) \pi(U_0 \in \ud u, (\widetilde \gG_0, \widetilde \gG_+) \in ( \ud v, \ud y))\\
&= \int_{\RR} \int_{\RR\times \ell_2} 
\Big(\|d(t;x-y)\|+ e^{-\gamma t} |u-v|\Big)
 \pi(U_0 \in \ud u, \widetilde \gG \in (\ud v, \ud y))
\end{align}
for any coupling $\pi$ between $U_0$ and $(\widetilde \gG_0,\widetilde \gG_+)$. In particular, for any coupling between the synchronous coupling $U_0 = \widetilde \gG_0$ and $\widetilde \gG_+$.  
Hence, 
\begin{align}
\wW_{0,p}(A(t;(\widetilde \gG_0,x)), \widetilde \gG) 
&\lqq e^{-\nu t} \EE[\|x-\widetilde \gG_+\|]. 
\end{align}
We continue with the lower bound. Note that in total generality we have for random vectors $(U_0, U_+)$ and $(G_0, G_+)$ with $\EE[G_+]=0$ 
\begin{align}
&\wW_{p}\bigg(\Big(\begin{array}{c} U_0 \\ U_+\end{array}\Big), \Big(\begin{array}{c} G_0 \\ G_+\end{array}\Big)\bigg) 
\gqq \wW_{1}\bigg(\Big(\begin{array}{c} U_0 \\ U_+\end{array}\Big), \Big(\begin{array}{c} G_0 \\ G_+\end{array}\Big)\bigg)\\
&= \inf_{\pi \in \cC((U_0, U_+), (G_0, G_+))} \iint_{\RR\times \ell_2} \Big(|u_0-g_0|+\|u_+ -g_+\|\Big) \pi\Big(\begin{array}{c} (U_0, U_+) \in (\ud u_0, \ud u_+)\\ (G_0, G_+) \in (\ud g_0, \ud g_+)\end{array}\Big) \\
&\gqq \inf_{\pi \in \cC((U_0, U_+), (G_0, G_+))}\Big| \iint_{\RR\times \ell_2} \big(u_+ - g_+\big) 
\pi\Big(\begin{array}{c} (U_0, U_+) \in (\ud u_0, \ud u_+)\\ (G_0, G_+) \in (\ud g_0, \ud g_+)\end{array}\Big)\Big| \\
&= \Big| \EE[U_+] - \EE[G_+] \Big| = | \EE[U_+] |. 
\end{align}
For $(G_0, G_+) = (\widetilde \gG_0, \widetilde \gG_+)$ and $U_0 = A_0(t;\widetilde \gG_0)$ and $U_+ = A_+(t;(\widetilde \gG_0, x))$ we infer due to $\EE[\widetilde \gG_0] = 0$ the estimate  
\begin{align}
\wW_{0,p}(A(t;(\widetilde \gG_0,x)), \widetilde \gG)
&\gqq \|\EE[A_+(t;(\widetilde \gG_0, x))]\| = |e^{-\gamma t} \EE[\gG_0]| + \|d(t;x)\| \\
&= \|d(t;x)\| = e^{-\nu t}\|x\|. 
\end{align}
This finishes the proof. 
\end{proof}

\begin{corollary}[Window cutoff convergence for $L=U$]\label{cor:windowOUestable} Assume the hypotheses of 
Theorem~\ref{thm:cutoffconvergence}. 
Then for any $x\in \ell_2$ and $p\gqq 1$ it follows that
\begin{align}\label{e:windowred}
&\lim\limits_{r\ra-\infty}\liminf\limits_{\e \ra 0} \frac{\wW_{0,p}(A(t_\e + r; (\widetilde \gG_0, x)), \widetilde\gG)}{\e} = \infty,\\
&\lim\limits_{r\ra\infty}\limsup\limits_{\e \ra 0} \frac{\wW_{0,p}(A(t_\e + r; (\widetilde \gG_0, x)), \widetilde\gG)}{\e} = 0.
\end{align}
\end{corollary}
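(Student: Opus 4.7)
The plan is to deduce both limit statements as direct consequences of the two-sided non-asymptotic bound already established in Theorem~\ref{thm:convergenciaroja}, namely
\[
e^{-\nu r}\|x\| \lqq \frac{\wW_{0,p}(A(t_\e + r; (\widetilde\gG_0, x)), \widetilde\gG)}{\e} \lqq e^{-\nu r}\big(\|x\|+\EE[\|\widetilde\gG_+\|]\big).
\]
The crucial feature is that both envelopes depend on $\e$ only through the validity constraint $r > -t_\e$; the bounding expressions themselves are functions of $r$ alone. Since $t_\e = \nu^{-1}\ln(1/\e)\to \infty$ as $\e\to 0^+$, for any fixed $r\in\RR$ there exists $\e_0 = \e_0(r)\in (0,1]$ such that $r > -t_\e$ for every $\e\in (0,\e_0]$, so the bounds in~\eqref{e:perfilred} apply for all sufficiently small $\e$.

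For the first limit, assuming $x\neq 0$, the lower bound yields $\liminf_{\e\to 0}\wW_{0,p}(A(t_\e+r;(\widetilde\gG_0,x)),\widetilde\gG)/\e \gqq e^{-\nu r}\|x\|$ for every fixed $r\in\RR$, and letting $r\to -\infty$ gives the claimed $\infty$. The case $x=0$ would make the first statement vacuous, since then the ratio is already controlled by $e^{-\nu r}\EE[\|\widetilde\gG_+\|]$ uniformly in $\e$; thus the implicit non-triviality assumption $x\neq 0$ mirrors the analogous Brownian statement of Corollary~\ref{cor:window}.

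For the second limit, the upper bound yields $\limsup_{\e\to 0}\wW_{0,p}(A(t_\e+r;(\widetilde\gG_0,x)),\widetilde\gG)/\e \lqq e^{-\nu r}(\|x\|+\EE[\|\widetilde\gG_+\|])$, and the right-hand side tends to $0$ as $r\to\infty$. This step relies only on the finiteness $\EE[\|\widetilde\gG_+\|] < \infty$, which is secured by the variance estimate $\EE[|\gG_n|^2]\lqq 4\gamma\int_0^\infty H_n^2(r)\,\ud r$ stated before Theorem~\ref{thm:convergenciaroja}, together with the summability provided by Lemma~\ref{lem:sumavarian} and the Jensen-type reduction exhibited in~\eqref{e:cotaprimeroporsegundo}.

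In summary, there is essentially no analytic obstacle: all of the hard work already sits inside Theorem~\ref{thm:convergenciaroja}, and the corollary is merely a passage to iterated limits, rendered immediate by the $\e$-independence of the envelopes. The only fine point worth flagging is the implicit hypothesis $x\neq 0$ in the first assertion, which one may want to state explicitly for sharpness.
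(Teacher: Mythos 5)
Your proof is correct and follows exactly the route the paper (implicitly) intends: the corollary is an immediate iterated-limit consequence of the two-sided, essentially $\e$-independent envelope $e^{-\nu r}\|x\|\lqq \wW_{0,p}/\e \lqq e^{-\nu r}(\|x\|+\EE[\|\widetilde\gG_+\|])$ from Theorem~\ref{thm:convergenciaroja}, using that $t_\e\to\infty$ makes the constraint $r>-t_\e$ eventually satisfied for each fixed $r$, and that $\EE[\|\widetilde\gG_+\|]<\infty$ by the variance bound and Lemma~\ref{lem:sumavarian}. The paper gives no separate proof of this corollary, so there is nothing to compare against beyond this. Your remark that the first limit requires $x\neq 0$ (the lower envelope degenerates to $0$ otherwise) is a fair observation; the paper shares the same implicit convention in Corollary~\ref{cor:window}, and the intended reading is indeed a nonzero initial datum.
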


\begin{remark}\label{rem:smallGOU}
The analogous results for $\e$-small Ornstein-Uhlenbeck perturbations as in Corollary~\ref{cor:small} can be obtained similarly. 
\end{remark}

\section{\textbf{Abrupt thermalization for different types of L\'evy perturbations}}\label{s:ruidoestable}

\noindent It is well-known that Brownian motion is a particular example of the larger class of random drivers, namely the class of L\'evy processes. Recall that a L\'evy process is a c\`adl\`ag random process with stationary and independent increments starting in $0$. For details we refer to~\cite{APPLEBAUMBOOK, KP22,Sa}. 

\subsection{\textbf{$\alpha$-stable perturbations with fixed amplitude}}\label{ss:ruidoestable}\hfill\\

\noindent
In this subsection, we restrict our attention to the case of 
a symmetric $\alpha$-stable driver $L_\alpha:=(L_\alpha(t))_{t\gqq 0}$ for some $1 < \alpha < 2$ with characteristic exponent $\psi(u) = -\sigma^\alpha |u|^\alpha$, $u\in \RR$ for some fixed $\sigma>0$. 
It is shown in~\cite{PS08} that the solution of~\eqref{e:sistema} has the same shape when $\sigma B$ is replaced by $L=L_\alpha$.
In abuse of notation, we keep the analogous notation of the Gaussian system in Section~\ref{s:Gaussianoblanco}. 

\noindent
In the sequel, we verify that in this setting $\EE[\|A(t;x)\|]<\infty$ for any $t\gqq 0$ and $\EE[\| \gG\|]<\infty$ 
for the limit law $\gG$. 
We show that for any $x\in \ell_2$ and $t\gqq 0$ 
\[
\sum_{n=1}^\infty |A_n(t;x)|^2 <\infty \quad \textrm{ a.s.}   
\]
We start with the elementary observation that for any 
$1\lqq \eta < 2$ we have 
$\ell_\eta\subset \ell_2$,
where
\begin{equation}
\ell_\eta := \bigg\{(x_n)_{n\in \NN}\colon x_n\in \RR,\,\, \sum_{n=1}^\infty |x_n|^\eta <\infty\bigg\}. 
\end{equation}
Hence, it is sufficient to show that $\sum_{n=1}^\infty \EE[|A_n(t;x)|^\eta]<\infty$ for  $1\lqq \eta < \alpha <2$.  
Since $A(t;x) = d(t;x) + C(t)$ and since $x\in \ell_2$ implies $d(t;x)\in \ell_2$ for all $t\gqq 0$, 
it is sufficient to show that $\sum_{n=1}^\infty \EE[|C_n(t)|^\eta]<\infty$ for all $t\gqq 0$. 
For any $t\gqq 0$ we have 
\[
\EE\Big[e^{\mathsf{i} u C_n(t)}\Big] = \exp\Big(-\sigma^\alpha u^\alpha \int_0^t |H_n(s)|^\alpha  \ud s\Big), 
 \]
and sending $t\ra \infty$ we obtain by~(29) in~\cite{PS08} 
\[
\EE\Big[e^{\mathsf{i} u \gG_n}\Big] = \exp\Big(-\sigma^\alpha u^\alpha \int_0^\infty |H_n(s)|^\alpha  \ud s\Big). 
\]  
By~\cite{Sa} formula~(25.6) (or~\cite{KP22} Theorem~1.13) we have 
for a symmetric $(\alpha, \sigma)$-stable distribution $X$ with $\EE[e^{\mathsf{i} u X}] = e^{-\sigma^\alpha|u|^\alpha}$ 
the absolute moment of order $0 < \theta < \alpha$ 
\begin{align}
\EE[|X|^\theta] = \sigma^\theta 2^\theta  \frac{\Gamma(\frac{1+\theta}{2}) \Gamma(1-\frac{\theta}{\alpha})}{\sqrt{\pi} \Gamma(1-\frac{\theta}{2})}, 
\end{align}
where $\Gamma$ denotes the usual Gamma function. 
Hence, 
\begin{align}\label{e:momentotheta}
\EE[|\gG_n|^\theta] = 2^\theta \sigma^\theta \Big(\int_0^\infty |H_n(r)|^\alpha  \ud r\Big)^\frac{\theta}{\alpha} 
\frac{\Gamma(\frac{1+\theta}{2}) \Gamma(1-\frac{\theta}{\alpha})}{\sqrt{\pi} \Gamma(1-\frac{\theta}{2})}.
\end{align}

\noindent
Note that on the right-hand side $\theta = \alpha$ produces the factor $\Gamma(0) = \infty$. 
Hence, it is sufficient to show that for some $1 \lqq \theta < \alpha < 2$ 
\[
\sum_{n=1}^\infty \Big(\int_0^\infty |H_n(r)|^\alpha  \ud r\Big)^\frac{\theta}{\alpha} <\infty. 
\]
Observe that $H_n(r) = n \frac{J_n(2r)}{r} e^{-\nu r}$. Therefore, the preceding condition reads as follows 
\begin{equation}\label{e:Lambdaestable}
\sum_{n=1}^\infty n^\theta \Big(\int_0^\infty \frac{|J_n(2r)|^\alpha e^{-\alpha \nu r}}{r^\alpha} \ud r\Big)^\frac{\theta}{\alpha} <\infty.   
\end{equation}
This result is established in Lemma~\ref{lem:stablesummability}. 

\noindent
For the main result we use the shift linearity of $\wW_{p}$ for $p\gqq 1$ codified in Lemma~2.2 in~\cite{BHPWA}, which turns out to be false in general for $p<1$ (see~\cite[Remark~2.4]{BHPWA}). 

\begin{theorem}[Ergodic WKR bounds for $L=L_\alpha$]\label{thm:convergenciaestable}
Fix $1< \alpha < 2$. 
 Then for any $x\in\ell_2$, $1 \lqq p < \alpha$ and  
\begin{equation}\label{e:tiempo}
t_\e := \frac{1}{\nu} \ln(1/\e), \quad \e\in (0,1] 
\end{equation}
it follows that for all ${\e\in (0,1]}$ and $r>-t_\e$   
\begin{align}\label{e:perfilrosa}
\quad e^{-\nu\cdot r} \|x\|
&\lqq \frac{\wW_{p}(A(t_\e + r; x), \gG)}{\e}\lqq e^{-\nu \cdot r} (\|x\| + \EE[\|\gG\|]).
\end{align}
\end{theorem}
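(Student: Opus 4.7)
The strategy is to reproduce the sandwich argument of the Gaussian case (Theorem~\ref{thm:cutoffconvergence}) by verifying that Lemmas~\ref{lem:cotas}, \ref{lem:Lyapunov} and~\ref{lem:desintegracion} go through verbatim in the symmetric $\alpha$-stable setting, as foreshadowed in Remark~\ref{rem}. Lemma~\ref{lem:Lyapunov} is purely deterministic and remains untouched. The essential point is therefore to check the moment hypotheses under which Lemma~\ref{lem:cotas} (which requires matching first moments and finite $p$-th moments) and Lemma~\ref{lem:desintegracion} (which uses the shift linearity of $\wW_p$ together with $\EE[\|\gG\|]<\infty$) are applicable.

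The plan is as follows. First, I will record that for $1<\alpha<2$ the symmetry of $L_\alpha$ implies $\EE[C_n(t)]=0$ for every $n\in\NN$ and $t\gqq 0$, and hence $\EE[\gG_n]=0$ by passing to the limit; in particular the matching-of-means condition $\EE[\gG]=\EE[C(t)]$ in the lower bound of Lemma~\ref{lem:cotas} is satisfied in a component-wise (hence coordinate-by-coordinate integrable) sense. Second, for any $1\lqq p<\alpha$, I will invoke formula~\eqref{e:momentotheta} with $\theta=p$ together with the summability~\eqref{e:Lambdaestable} established in Lemma~\ref{lem:stablesummability} to conclude
\[
\sum_{n=1}^{\infty}\EE[|\gG_n|^p]<\infty,
\]
which, combined with $\ell_p\subset\ell_2$ for $p\lqq 2$ and a standard Jensen/monotone-convergence estimate as in~\eqref{e:cotaprimeroporsegundo}, gives $\EE[\|\gG\|^p]<\infty$ and in particular $\EE[\|\gG\|]<\infty$. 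The same argument applied to the truncated integrals $\int_0^t|H_n(s)|^\alpha\,\ud s$ yields $\EE[\|C(t)\|^p]<\infty$ and $\EE[\|A(t;x)\|^p]<\infty$ for every $t\gqq 0$.

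Third, with these moment conditions in hand the three lemmas transfer mechanically. The upper bound of Lemma~\ref{lem:cotas} uses only the triangle inequality, translation invariance and shift linearity of $\wW_p$, all of which are valid for any $p\gqq 1$ provided the involved laws have finite $p$-th moment. The disintegration in Lemma~\ref{lem:desintegracion} relies only on the Markov property of~\eqref{e:sistema}, which is still valid since $L_\alpha$ has stationary independent increments, together with shift linearity and Lemma~\ref{lem:Lyapunov}; the resulting bound
\[
\wW_p(C(t),\gG)\lqq e^{-\nu t}\EE[\|\gG\|]
\]
is independent of $p$. Fourth, combining the upper bounds exactly as in~\eqref{e:crudeupper} and evaluating at $t=t_\e+r$ gives the right inequality of~\eqref{e:perfilrosa}, while the deterministic identity $\|d(t;x)\|=e^{-\nu t}\|x\|$ from Lemma~\ref{lem:Lyapunov} together with the lower bound of Lemma~\ref{lem:cotas} yields the left inequality, as in~\eqref{e:crudelower}.

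The main obstacle is the verification of the finite $p$-th moment condition $\EE[\|\gG\|^p]<\infty$ for $1\lqq p<\alpha$: unlike in the Gaussian case, where all moments are finite automatically, here one cannot take $p=\alpha$ (the factor $\Gamma(1-\tfrac{\theta}{\alpha})$ in~\eqref{e:momentotheta} blows up as $\theta\uparrow\alpha$), and one must control the Bessel-weighted tail sum~\eqref{e:Lambdaestable}, which is precisely the content of Lemma~\ref{lem:stablesummability}. This is also what forces the restriction $1\lqq p<\alpha$ in the statement; the lower bound requires $\alpha>1$ so that the means entering Lemma~\ref{lem:cotas} exist, and the upper bound requires $p<\alpha$ so that shift linearity can be applied to $C(t)$ and $\gG$.
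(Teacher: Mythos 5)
Your proposal is correct and follows essentially the same route as the paper: it reduces to Lemmas~\ref{lem:cotas}, \ref{lem:Lyapunov} and the ergodicity bound of Lemma~\ref{lem:ergoblanco} (whose proof is exactly Lemma~\ref{lem:desintegracion} plus the others), and the only new input is the verification of finite $p$-th moments via~\eqref{e:momentotheta} and the Bessel summability of Lemma~\ref{lem:stablesummability}, which is precisely what the paper's proof sketch indicates when it says to replace the second moments in~\eqref{e:cotaprimeroporsegundo} by $\theta$-th moments. Your explicit discussion of why $\alpha>1$ (existence of first moments for the lower bound) and $p<\alpha$ (shift linearity needs $\EE[\|C(t)\|^p],\EE[\|\gG\|^p]<\infty$) are needed makes the logic slightly more transparent than the paper's terse remark, but the argument is identical.
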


\noindent
The proof is a combination of Lemma~\ref{lem:cotas}, Lemma~\ref{lem:Lyapunov} and Lemma~\ref{lem:ergoblanco}. 
Note that the first two lemmas only depend on the existence of first order moments. 
Lemma~\ref{lem:ergoblanco} also remains valid, if we replace 
the second order moments $\EE[|\gG_n|^2]$ in 
formula~\eqref{e:cotaprimeroporsegundo} by $\EE[|\gG_n|^\theta]$ obtained in~\eqref{e:momentotheta} and apply 
condition~\eqref{e:Lambdaestable}. 

\noindent
We infer analogously cutoff convergence. 

\begin{corollary}[Window cutoff convergence for $L=L_\alpha$]\label{cor:windowestable} Assume the hypotheses of Theorem~\ref{thm:convergenciaestable}.  
Then for any $x\in \ell_2$ and $1\lqq p< \alpha$ it follows that 
\begin{equation}\label{e:windowestable}
\begin{split} 
&\lim\limits_{r\ra-\infty}\liminf\limits_{\e \ra 0} \frac{\wW_{p}(A(t_\e + r;x), \gG)}{\e} = \infty,\\
&\lim\limits_{r\ra\infty}\limsup\limits_{\e \ra 0}  \frac{\wW_{p}(A(t_\e + r;x), \gG)}{\e} = 0.
\end{split}
\end{equation}
\end{corollary}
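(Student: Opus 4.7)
The plan is to derive both limits directly from the non-asymptotic sandwich bound
\[
e^{-\nu\cdot r}\|x\| \;\lqq\; \frac{\wW_{p}(A(t_\e+r;x),\gG)}{\e} \;\lqq\; e^{-\nu\cdot r}\bigl(\|x\|+\EE[\|\gG\|]\bigr)
\]
established in Theorem~\ref{thm:convergenciaestable}. Since $t_\e = \nu^{-1}\ln(1/\e)\ra\infty$ as $\e\ra 0$, for every fixed $r\in\RR$ the admissibility condition $r>-t_\e$ is equivalent to $\e<e^{\nu r}$, hence is satisfied for all sufficiently small $\e>0$. Thus the sandwich can be applied in both one-sided limits under consideration, and the corollary will reduce to elementary manipulations with $e^{-\nu r}$.

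First I would handle the upper statement. For fixed $r$ large enough so that $-t_\e<r$ for all small $\e$, taking $\limsup_{\e\ra 0}$ of the upper bound (which does not depend on $\e$) yields
\[
\limsup_{\e\ra 0}\frac{\wW_{p}(A(t_\e+r;x),\gG)}{\e} \;\lqq\; e^{-\nu\cdot r}\bigl(\|x\|+\EE[\|\gG\|]\bigr),
\]
and sending $r\ra\infty$ the right-hand side tends to $0$, giving the second limit in~\eqref{e:windowestable}. Symmetrically, for the lower statement, I would take $\liminf_{\e\ra 0}$ on the lower bound, obtaining
\[
\liminf_{\e\ra 0}\frac{\wW_{p}(A(t_\e+r;x),\gG)}{\e} \;\gqq\; e^{-\nu\cdot r}\|x\|,
\]
and since $\|x\|>0$ (the case $x=0$ being vacuous as both sides collapse), letting $r\ra-\infty$ drives $e^{-\nu r}\|x\|\ra\infty$, yielding the first limit in~\eqref{e:windowestable}.

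The only substantive ingredient beyond the arithmetic of the exponentials is the finiteness of $\EE[\|\gG\|]$, which one needs in order to apply the upper bound in the first place. This is the one point that is genuinely $\alpha$-stable specific and potentially delicate, because the second moment of $\gG_n$ is infinite for $\alpha<2$. The issue is resolved exactly as in the discussion preceding Theorem~\ref{thm:convergenciaestable}: we pick any $1\lqq\theta<\alpha$, combine the fractional-moment formula~\eqref{e:momentotheta} with the summability condition~\eqref{e:Lambdaestable} (provided by Lemma~\ref{lem:stablesummability}) and the inclusion $\ell_\theta\subset\ell_2$, so that $\EE[\|\gG\|]<\infty$. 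This is the only genuine obstacle; once it is in hand, the remainder of the proof is a mechanical passage to the limit in the sandwich.
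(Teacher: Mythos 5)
Your proposal is correct and follows essentially the same route the paper takes: the corollary is an immediate consequence of the non-asymptotic sandwich in Theorem~\ref{thm:convergenciaestable}, obtained by first taking $\liminf$/$\limsup$ over $\e$ and then sending $r\ra\mp\infty$, and your identification of $\EE[\|\gG\|]<\infty$ as the one genuinely $\alpha$-stable ingredient (resolved via the fractional-moment bound~\eqref{e:momentotheta} together with Lemma~\ref{lem:stablesummability}) matches the paper's discussion preceding the theorem. One small caveat: your aside that ``the case $x=0$ is vacuous as both sides collapse'' is not quite accurate---for $x=0$ the lower bound degenerates to $0$ and yields no information about the divergence of the $\liminf$, while $\wW_p(C(t),\gG)$ on the left-hand side need not vanish. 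The paper's own statement shares this looseness (the first limit in~\eqref{e:windowestable} should be read as requiring $x\neq 0$), so this is a shared presentational issue rather than a gap specific to your argument.
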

\noindent
Small noise results similar to Corollary~\ref{cor:small} are obtained straightforwardly. 

\subsection{\textbf{$\alpha$-stable Ornstein-Uhlenbeck perturbations with fixed amplitude}} \hfill\\

\noindent We now study the system~\eqref{e:sistema} with $L(t) = U(t)$, $(U(t))_{t\gqq 0}$ being an $\alpha$-stable Ornstein-Uhlenbeck process satisfying 
\begin{equation}\label{e: OUestable}
\ud U(t)  = -\gamma U(t) \ud t + \sigma \ud L(t), \quad U(0) \stackrel{d}{=} U_0, \quad \sigma>0,\, \gamma>0,\, x_0\in \RR,
\end{equation}
where $L = (L(t))_{t\gqq 0}$ is a scalar symmetric $\alpha$-stable process with $1<\alpha<2$ 
\[
\EE[e^{\mathsf{i} u L(t)}] = e^{-t \sigma^\alpha |u|^\alpha} \quad\textrm{ for all }\quad u\in \RR,\, t\gqq 0.  
\]
The random initial data $U_0$ is distributed according to the 
invariant distribution of~\eqref{e: OUestable} and it has the characteristic function 
\begin{align}
\EE[e^{\mathsf{i} u U_0}] = e^{-|u|^\alpha \sigma^\alpha \int_0^\infty e^{-\gamma s \alpha} ds} = e^{-|u|^\alpha \frac{\sigma^\alpha}{\alpha \gamma}} \quad \textrm{ for all }\quad u\in \RR. 
\end{align}
For details see~\cite[Theorem 17.5]{Sa} and 
formula~\eqref{e:OUestableinvariante} in Appendix~\ref{a:stableOU}. 

\noindent
Note that $U_0 = U(t;U_0)$ in law for all $t\gqq 0$ and $U_0$ being independent from $(L(t))_{t\gqq 0}$. In the spirit 
of~\eqref{e:aumentado} we consider the extended system, where $U(t) = A_0(t)$, 
\begin{equation}\label{e:aumentadoOUestable}
\begin{split}
A_0(t;x)  &= U_0+\int_0^t (-\gamma A_0(s;x_0)) \ud s + L(t),\\
A_1(t;x) &= x_1 + \int_0^t (-A_2(s;x) - \nu A_1(s;x)) \ud s  + A_0(t),\\
A_n(t;x) &= x_n + \int_0^t ({A_{n-1}(s;x)}-A_{n+1}(s;x) - \nu A_n(s;x)) \ud s, \quad n\gqq 2,\, t\gqq 0.
\end{split}
\end{equation}
Again, we obtain that~\eqref{e:aumentadoOUestable} defines a Markovian process $A(t;x) = (A_n(t;x))_{n\in \NN_0}$ and maintain the notation $A(t;x) = (A_0(t;x), A_+(t;x))$.  
The extended system~\eqref{e:aumentadoOUestable} lives naturally in the state space $\RR\times \ell_2$, while~\eqref{e:sistema} has values in $\ell_2$ with the norm $\|\cdot \|$.
Therefore, analogously to Section~\ref{s:Gaussianorojo} we naturally extend the notation from $\ell_2$ to $\RR\times \ell_2$. 
All properties remain valid. 

\noindent
In particular, similarly to Lemma~\ref{lem:desintegracion} 
it is shown there, that whenever 
$\EE[\|\widetilde \gG\|_0]<\infty$, we have  
\[
\wW_{0,p}(A(t;x), \widetilde \gG) \ra 0\quad \textrm{ as }\quad t\ra\infty. 
\]

\begin{theorem}[Ergodic WKR bounds for $L=U$]\label{thm:convergenciarojaestable}
 Fix $1< \alpha < 2$  
 and
 \[
t_\e := \frac{1}{\nu} \ln(1/\e), \quad {\e\in (0,1]}. 
\]
Then for any $x\in \ell_2$, $1\lqq p< \alpha$, $\e\in (0,1]$ and $r>-t_\e$ 
it follows that  
\begin{align}\label{e:perfil}
e^{-\nu\cdot r} \|x\|
&\lqq \frac{\wW_{0,p}(A(t_\e + r; (\widetilde \gG_0, x)), \widetilde\gG)}{\e}\lqq e^{-\nu \cdot r} (\|x\| + \EE[\|\widetilde \gG_+\|]).
\end{align}
\end{theorem}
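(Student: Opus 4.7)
The plan is to mirror the proof of Theorem~\ref{thm:convergenciaroja}, substituting the $\alpha$-stable moment bookkeeping developed in Section~\ref{ss:ruidoestable} for the Gaussian second-moment computations. The structural ingredients, namely the triangle inequality, translation invariance, shift linearity~\eqref{e:shift} of $\wW_{0,p}$, the Markov property, disintegration, the Lyapunov contraction of Lemma~\ref{lem:Lyapunov}, and the mean-matching lower bound, are all unchanged; only the verification of finite $p$-th moments for the stable limit requires modification. Crucially we retain the hypothesis $1\lqq p<\alpha$, which ensures both the applicability of shift linearity (via $\EE[\|\widetilde\gG\|_0^p]<\infty$) and the existence of first moments (via $\alpha>1$) needed for the lower bound.

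First I would verify that $\EE[\|\widetilde\gG\|_0]<\infty$. The zero-th component $\widetilde\gG_0$ is the invariant law of the scalar $\alpha$-stable Ornstein-Uhlenbeck process, which by~\eqref{e:OUestableinvariante} is symmetric $\alpha$-stable with scale $\sigma(\alpha\gamma)^{-1/\alpha}$, so $\EE[|\widetilde\gG_0|^p]<\infty$ for any $p<\alpha$ by the standard absolute moment formula for stable laws. For the spatial component $\widetilde\gG_+$, each projection $\widetilde\gG_{+,n}$ is a linear functional of the driving stable process $L$ and is therefore itself symmetric $\alpha$-stable; the moment formula~\eqref{e:momentotheta}, suitably adjusted by the scalar OU filtering, together with the summability condition~\eqref{e:Lambdaestable} from Lemma~\ref{lem:stablesummability}, yields $\sum_n \EE[|\widetilde\gG_{+,n}|^p]<\infty$, and Jensen's inequality applied as in~\eqref{e:cotaprimeroporsegundo} gives $\EE[\|\widetilde\gG_+\|]<\infty$.

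With moments under control, the upper bound follows by the synchronous replica coupling $U_0=\widetilde\gG_0$ exactly as in Theorem~\ref{thm:convergenciaroja}: the Markov property, disintegration over $\widetilde\gG_+$, and linearity of the system collapse the conditional discrepancy to $\|d(t;x-y)\|$; Lemma~\ref{lem:Lyapunov} and the $\ell_2$-triangle inequality then produce
\begin{equation}
\wW_{0,p}(A(t;(\widetilde\gG_0,x)),\widetilde\gG) \lqq e^{-\nu t}\EE[\|x-\widetilde\gG_+\|] \lqq e^{-\nu t}(\|x\|+\EE[\|\widetilde\gG_+\|]),
\end{equation}
and setting $t=t_\e+r$ gives the upper bound in~\eqref{e:perfil}. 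For the lower bound, Jensen's inequality gives $\wW_{0,p}\gqq \wW_{0,1}$, which in turn dominates the norm of the difference of means of the spatial projection; linearity together with $\EE[\widetilde\gG_0]=\EE[\widetilde\gG_+]=0$ (by symmetry of the stable invariant laws, using $\alpha>1$) yields $\EE[A_+(t;(\widetilde\gG_0,x))]=d(t;x)$, so $\wW_{0,p}\gqq \|d(t;x)\|=e^{-\nu t}\|x\|$ by Lemma~\ref{lem:Lyapunov}, which is the lower bound at $t=t_\e+r$. The main obstacle, to my mind, is the stable moment analysis: while the Bessel-weighted summability of Lemma~\ref{lem:stablesummability} already handles the main system in Section~\ref{ss:ruidoestable}, one must also confirm that it survives the further convolution with the scalar OU kernel $e^{-\gamma s}$ entering through $A_0$. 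This is essentially automatic because the additional factor is bounded and only shrinks integrals, but it is the one place where the stable OU setting differs non-trivially from its Gaussian counterpart and deserves explicit verification.
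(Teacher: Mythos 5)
Your proposal is correct and follows the same route the paper intends (the paper omits an explicit proof of Theorem~\ref{thm:convergenciarojaestable}, pointing instead to the argument of Theorem~\ref{thm:convergenciaroja} combined with the $\alpha$-stable moment bookkeeping of Subsection~\ref{ss:ruidoestable} and the stable characteristics computed in Appendix~\ref{a:stableOU}). The one place where you should tighten the wording is the claim that the OU filtering ``only shrinks integrals'': the relevant quantity in $\sigma_n^\alpha(\infty)$ is $\int_0^\infty |H_n(s) - \gamma\int_0^s H_n(r) e^{-\gamma(s-r)}\ud r|^\alpha \ud s$, which is not pointwise dominated by $\int_0^\infty |H_n|^\alpha$; rather, one applies $|a-b|^\alpha \lqq 2^{\alpha-1}(|a|^\alpha + |b|^\alpha)$ and then uses that $\gamma e^{-\gamma u}$ is a probability density together with Jensen's inequality and Fubini to get $\int_0^\infty \bigl|\int_0^s H_n(r)\gamma e^{-\gamma(s-r)}\ud r\bigr|^\alpha\ud s \lqq \int_0^\infty |H_n(r)|^\alpha\ud r$, after which Lemma~\ref{lem:stablesummability} applies unchanged; the $U_0$-contribution to $\sigma_n^\alpha(\infty)$ is geometrically decaying in $n$ by the closed form for the Laplace transform of $H_n$ and is summable a fortiori.
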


\begin{corollary}[Window cutoff convergence for $L=U$]\label{cor:windowOU} Assume the hypotheses of Theorem~\ref{thm:convergenciarojaestable}. 
Then for any $x\in \ell_2$ and $1\lqq p< \infty$ it follows that
\begin{align}\label{e:windowstablenoise}
&\lim\limits_{r\ra-\infty}\liminf\limits_{\e \ra 0} \frac{\wW_{0,p}(A(t_\e + r; (\widetilde \gG_0, x)), \widetilde\gG)}{\e} = \infty,\\
&\lim\limits_{r\ra\infty}\limsup\limits_{\e \ra 0} \frac{\wW_{0,p}(A(t_\e + r; (\widetilde \gG_0, x)), \widetilde\gG)}{\e} = 0.
\end{align}
\end{corollary}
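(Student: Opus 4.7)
The plan is to derive both limits directly from the two-sided bound established in Theorem~\ref{thm:convergenciarojaestable}, since neither side of that bound depends on $\e$ once $\e$ is small enough that the admissibility constraint $r > -t_\e$ is satisfied. In particular, for fixed $r\in\RR$, the condition $r>-t_\e$ is equivalent to $\e<e^{\nu r}$, which holds for all sufficiently small $\e>0$.

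First I would fix $x\in\ell_2$ and $1\lqq p<\alpha$, and for each $r\in\RR$ restrict attention to $\e\in(0,e^{\nu r}\wedge 1]$, so that Theorem~\ref{thm:convergenciarojaestable} applies and yields
\[
e^{-\nu r}\|x\| \;\lqq\; \frac{\wW_{0,p}(A(t_\e+r;(\widetilde\gG_0,x)),\widetilde\gG)}{\e} \;\lqq\; e^{-\nu r}\bigl(\|x\|+\EE[\|\widetilde\gG_+\|]\bigr).
\]
Taking $\liminf_{\e\to 0}$ in the left inequality (whose right-hand side is $\e$-independent) gives
\[
\liminf_{\e\to 0}\frac{\wW_{0,p}(A(t_\e+r;(\widetilde\gG_0,x)),\widetilde\gG)}{\e}\;\gqq\; e^{-\nu r}\|x\|,
\]
and since $\nu>0$, letting $r\to-\infty$ drives $e^{-\nu r}\|x\|$ to $+\infty$ (for $x\neq 0$; the case $x=0$ is trivial and is implicit in the author's formulation in the previous corollaries). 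This proves the first limit. For the second, taking $\limsup_{\e\to 0}$ in the right inequality gives
\[
\limsup_{\e\to 0}\frac{\wW_{0,p}(A(t_\e+r;(\widetilde\gG_0,x)),\widetilde\gG)}{\e}\;\lqq\; e^{-\nu r}\bigl(\|x\|+\EE[\|\widetilde\gG_+\|]\bigr),
\]
and sending $r\to\infty$ makes the right-hand side vanish.

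The only non-trivial ingredient is that $\EE[\|\widetilde\gG_+\|]<\infty$, which is needed to make the upper bound finite, but this is precisely the finiteness already invoked as a hypothesis of Theorem~\ref{thm:convergenciarojaestable}: it follows from the $\alpha$-stable moment formula~\eqref{e:momentotheta} applied componentwise to the projections of $\widetilde\gG_+$, combined with the summability criterion~\eqref{e:Lambdaestable} established in Lemma~\ref{lem:stablesummability}, after observing that the $\alpha$-stable Ornstein-Uhlenbeck driver has a stationary $\alpha$-stable marginal with rescaled scale parameter $\sigma/(\alpha\gamma)^{1/\alpha}$. Apart from this finiteness check, which is already handled in the preamble to Theorem~\ref{thm:convergenciarojaestable}, the argument is purely mechanical and does not present any genuine obstacle; the main point to flag is merely the mild verification that the admissibility window $\e<e^{\nu r}$ is non-empty for every $r\in\RR$, so that passing to $\liminf$ and $\limsup$ in $\e$ is meaningful before letting $r\to\pm\infty$.
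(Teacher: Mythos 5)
Your proof is correct and takes the same route the paper implicitly relies on: Corollary~\ref{cor:windowOU} is a mechanical consequence of the two-sided bound of Theorem~\ref{thm:convergenciarojaestable}, obtained by observing that once $\e < e^{\nu r}$ the bound is $\e$-independent, taking $\liminf/\limsup$, and then sending $r\to\mp\infty$. Two small remarks: you silently (and correctly) restrict to $1\lqq p<\alpha$, which is the range actually supplied by Theorem~\ref{thm:convergenciarojaestable}; the ``$1\lqq p<\infty$'' in the corollary's statement is a slip, since for $\alpha$-stable drivers the $p$-th moments required by $\wW_{0,p}$ exist only for $p<\alpha$. And as you note, the first limit is vacuous for $x=0$; the statement (like the analogous Corollary~\ref{cor:window}) tacitly assumes $x\neq 0$.
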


\noindent
Small noise results similarly to Corollary~\ref{cor:small} can be obtained straightforwardly. 

\subsection{\textbf{The case of general L\'evy processes with second moments}}\hfill\\ 

\noindent For any centered L\'evy process $(L(t))_{t\gqq 0}$ with finite second moment, the characteristic function is 
given by 
\[
\RR \ni u \mapsto \EE[e^{\mathsf{i} u L(t)}] = e^{-t\Psi(u)}, \quad \textrm{ where }\quad \Psi(u) = \frac{\sigma^2 u^2}{2} + \int_{\RR} (e^{\mathsf{i}uy} - 1 - \mathsf{i}uy) \rho(\ud y), 
\]
where $\rho$ is the jump measure satisfying 
$\rho(\{0\}) = 0$ and $\int_{\RR} y^2 \rho(\ud y) < \infty$. Important examples are standard Brownian motion $\rho = 0$ treated in Section~\ref{s:Gaussianoblanco}, 
symmetric square-integrable compound Poisson processes, tempered $\alpha$-stable processes and two-sided symmetric $\Gamma$-processes. Note that $\alpha$-stable processes do not exhibit finite second moments. 

\begin{theorem}\label{thm:convergenciasegundo}
Consider the solution $(A(t;x))_{t\gqq 0}$ of system~\eqref{e:sistema} for initial data $x\in \ell_2$, where $L = (L(t))_{t\gqq 0}$ is a centered L\'evy process with $\EE[|L(1)|^2] = \frac{1}{2}$ and $\nu>0$. 
Define the time scale $(t_\e)_{\e\in (0,1]}$ by~\eqref{e:tiempo}. Then for any $x\in \ell_2$, $1\lqq p \lqq 2$, $r> -t_\e$ and $\e\in (0,1]$ 
the estimate~\eqref{e:perfilrosa} is valid. 
\end{theorem}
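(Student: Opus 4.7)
The plan is to follow the same three-lemma architecture used for the Brownian case in Section~\ref{s:Gaussianoblanco}, since Lemma~\ref{lem:cotas}, Lemma~\ref{lem:Lyapunov} and Lemma~\ref{lem:desintegracion} are, per Remark~\ref{rem}, insensitive to the Gaussian character of the noise: they only use the decomposition $A(t;x) = d(t;x) + C(t)$, shift linearity, translation invariance, the Markov property, and the matching condition $\EE[\gG] = \EE[C(t)]$, all of which survive the transition to a general square-integrable L\'evy driver. The normalization $\EE[|L(1)|^2] = \tfrac{1}{2}$ plays the role of the previous $\sigma^2$ and does not appear in the final bounds.

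The first step is to verify that the invariant measure $\gG$ exists on $\ell_2$ and that $\EE[\|\gG\|] < \infty$. Since $L$ is a centered square-integrable L\'evy process, It\^o's isometry yields
\begin{align}
\EE[C_n(t)^2] = \EE[|L(1)|^2] \int_0^t H_n(s)^2 \ud s = \tfrac{1}{2} \int_0^t H_n(s)^2 \ud s,
\end{align}
which is monotone in $t$ and converges to $\sigma_n^2 := \tfrac{1}{2} \int_0^\infty H_n(s)^2 \ud s$. By Lemma~\ref{lem:sumavarian} the sequence $(\sigma_n^2)_{n\in\NN}$ is summable, so $C(t)$ converges in $L^2(\Omega; \ell_2)$ to a limit $\gG$ which is the unique invariant measure of~\eqref{e:sistema}, and by the Jensen--Cauchy--Schwarz bound~\eqref{e:cotaprimeroporsegundo} we obtain $\EE[\|\gG\|] \lqq \bigl(\sum_{n} \sigma_n^2\bigr)^{1/2} < \infty$. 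The same computation shows $\EE[\|C(t)\|^2] \lqq \sum_n \sigma_n^2 < \infty$, so $\wW_p(A(t;x), \gG)$ is finite for every $1 \lqq p \lqq 2$.

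With these moment bounds in hand, I would apply Lemma~\ref{lem:cotas} (whose proof uses only the triangle inequality, translation invariance, shift linearity and the matching of first moments, the last being automatic here since $L$ is centered and hence so are $C(t)$ and $\gG$) to sandwich the target quantity between $\|d(t;x)\|$ and $\|d(t;x)\| + \wW_p(C(t), \gG)$. Lemma~\ref{lem:Lyapunov} gives $\|d(t;x)\| = e^{-\nu t} \|x\|$, and Lemma~\ref{lem:desintegracion} applied mutatis mutandis (its proof uses only the Markov property of the system and the shift linearity~\eqref{e:shift}, both valid here) produces $\wW_p(C(t), \gG) \lqq e^{-\nu t} \EE[\|\gG\|]$. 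Specializing to $t = t_\e + r$ and dividing by $\e = e^{-\nu t_\e}$ yields exactly~\eqref{e:perfilrosa}.

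The only subtlety, and the reason for the restriction $p \lqq 2$, is that the three lemmas tacitly require finiteness of $\EE[\|A(t;x)\|^p]$ and $\EE[\|\gG\|^p]$. Without higher moments of $L$, square-integrability is the best one can guarantee for $C(t)$ and $\gG$ (e.g.\ a compound Poisson driver may have no moments beyond order $2$), so $p$ must not exceed $2$. This upper limit on $p$ is the structural analogue of the bound $1 \lqq p < \alpha$ in the $\alpha$-stable Theorem~\ref{thm:convergenciaestable}, and, as there, it is essentially the only obstacle: once the $L^2$-existence of $\gG$ and the matching of means are in place, the proof is purely a transcription of the Brownian argument.
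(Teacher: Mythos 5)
Your proposal is correct and follows the same route as the paper: lower bound from Lemma~\ref{lem:cotas} and Lemma~\ref{lem:Lyapunov}, upper bound from Lemma~\ref{lem:desintegracion} via the ergodic bound of Lemma~\ref{lem:ergoblanco}, with $\EE[\|\gG\|]<\infty$ secured by the second-moment formula $\EE[|\gG_n|^2]=\EE[L(1)^2]\int_0^\infty H_n(r)^2\ud r$ (the paper phrases this as $-\Psi''(0)\int_0^\infty H_n(r)^2\ud r$, citing~\cite{PS08}) together with Lemma~\ref{lem:sumavarian} and the bound~\eqref{e:cotaprimeroporsegundo}. Your additional remark explaining the restriction $1\lqq p\lqq 2$ matches the paper's reasoning as well.
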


\begin{remark}~
\begin{enumerate}
\item Note that the lower bound in~\eqref{e:perfilrosa} is shown by Lemma~\ref{lem:cotas} and only depends on the first moments. 

\item In Lemma~\ref{lem:cotas} the upper bound is reduced to the ergodic bound treated in Lemma~\ref{lem:ergoblanco} 
with the help of the shift linearity for $p\gqq 1$, which only requires first moments. 
However, in order to avoid the technical difficulties in the calculus of the first absolute moment, 
the ergodic bound is dominated sub-optimally by the series second moments in~\eqref{e:cotaprimeroporsegundo}. 
While second moments can be obtained generically, the calculation of moments of lower order typically depends strongly on the underlying distribution. Hence, the condition of second moments can be removed case by case, 
as carried out in Subsection~\ref{ss:ruidoestable} for the $\alpha$-stable case $1<\alpha <2$. 

\item 
Due to the calculations~(28) in~\cite{PS08} note that
\begin{equation}\label{eq:defsegunda}
\EE[|\gG_n|^2] = -\Psi''(0) \int_0^\infty (H_n(r))^2 \ud r.  
\end{equation}
Since $\Psi''(0) = -\EE[L(1)^2]$ and $\EE[L(1)^2] = \frac{1}{2}$, Item~(2) 
with the help of~\eqref{eq:defsegunda} and Lemma~\ref{lem:sumavarian} gives Theorem~\ref{thm:convergenciasegundo} as in the Gaussian case.
\end{enumerate}
\end{remark}

\begin{corollary}[Window cutoff convergence for stable Ornstein-Uhlenbeck perturbations]\label{cor:windowsegundo} Assume the hypotheses of Theorem~\ref{thm:convergenciasegundo}. 
Then for any $x\in \ell_2$ and $1\lqq p \lqq 2$ we have the window cutoff~\eqref{e:windowestable}.
\end{corollary}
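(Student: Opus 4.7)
The plan is to read the window cutoff off directly from the two-sided bound in Theorem~\ref{thm:convergenciasegundo}, since after the substitution $t = t_\e + r$ both envelopes in that bound are already independent of $\e$. So this corollary is purely a matter of commuting the $\e \to 0$ and $r \to \pm\infty$ limits with the deterministic bounds, with no further analysis of the Lévy driver.

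First I would note that for any fixed $r \in \mathbb{R}$, the condition $r > -t_\e$ used in Theorem~\ref{thm:convergenciasegundo} holds as soon as $\e < e^{\nu r}$, so the estimate
\[
e^{-\nu r}\,\|x\| \;\lqq\; \frac{\wW_{p}(A(t_\e + r; x), \gG)}{\e} \;\lqq\; e^{-\nu r}\,(\|x\| + \EE[\|\gG\|])
\]
is valid for all sufficiently small $\e>0$. Since the left envelope does not depend on $\e$, taking $\liminf_{\e \to 0}$ preserves the inequality and gives $\liminf_{\e \to 0} \wW_{p}(A(t_\e + r;x),\gG)/\e \gqq e^{-\nu r}\|x\|$; letting $r \to -\infty$ then produces $+\infty$, establishing the first display in~\eqref{e:windowestable}. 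Symmetrically, the right envelope is $\e$-free, so $\limsup_{\e \to 0} \wW_{p}(A(t_\e + r;x),\gG)/\e \lqq e^{-\nu r}(\|x\| + \EE[\|\gG\|])$, and $r \to +\infty$ sends this upper bound to $0$, giving the second display.

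The only point that needs a line of justification is the finiteness of $\EE[\|\gG\|]$ under the present Lévy assumption $\EE[|L(1)|^2] = \tfrac{1}{2}$. I would reproduce the argument in~\eqref{e:cotaprimeroporsegundo}: by Jensen and monotone convergence,
\[
\EE[\|\gG\|] \;\lqq\; \Big(\sum_{n=1}^{\infty} \EE[|\gG_n|^2]\Big)^{1/2},
\]
and the moment identity~\eqref{eq:defsegunda} gives $\EE[|\gG_n|^2] = -\Psi''(0)\int_0^\infty H_n(r)^2\,\ud r = \tfrac{1}{2}\int_0^\infty H_n(r)^2\,\ud r$, which is summable in $n$ by Lemma~\ref{lem:sumavarian}. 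Consequently the upper envelope is finite and the argument above goes through without modification.

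There is no serious obstacle to flag; the genuine work was already absorbed into Theorem~\ref{thm:convergenciasegundo}. The one piece of mild care is simply book-keeping: one must observe that the bounds in that theorem are $\e$-independent once expressed in the moving time variable $r$, which is exactly what decouples the inner $\e \to 0$ limit from the outer $r \to \pm\infty$ limit.
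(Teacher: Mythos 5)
Your proof is correct and follows exactly the route the paper intends: the paper's own proof of this corollary is the one-liner ``the proof remains untouched,'' meaning it is the same direct read-off from the two-sided bound in Theorem~\ref{thm:convergenciasegundo} that you carry out. Your additional verification that $\EE[\|\gG\|]<\infty$ via~\eqref{eq:defsegunda} and Lemma~\ref{lem:sumavarian} is the right (and only) point worth a line of justification, and you handle it correctly.
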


\noindent
The proof remains untouched. 

\section{\textbf{Appendix: Shift linearity and the characteristics of the limiting measures}} 
\subsection{\textbf{Proof of the Shift linearity (2)~for the WKR distance $\wW_{p}$ in $\ell_2$}}\label{appendix:WS}\hfill\\

\noindent Fix $p\gqq 1$. We first show the upper 
bound~\eqref{e:shift}. Consider the synchronous coupling $\pi$ between $X$ and $X$. 
Then by construction  
\begin{equation}\label{e:SLpordebajo}
\wW_{p}(u + X, u) \lqq \Big(\iint_{\ell_2\times \ell_2} \|(u + x)-x\|^p \pi(\ud x, \ud x)\Big)^{\frac{1}{p}} = \| u\|.
\end{equation}
For the lower bound of~\eqref{e:shift} we consider any coupling between $u + X$ and $X$. 
Then we have the following representation 
\begin{align}
\iint_{\ell_2\times \ell_2} (w - x) \pi(\ud w, \ud x) &= \iint_{\ell_2\times \ell_2} w \pi(\ud w, \ud x) - \iint_{\ell_2\times \ell_2} x \pi(\ud w, \ud x)\\
&
= \EE[u + X] - \EE[X] = u. 
\end{align}
Now, Jensen's inequality yields 
\begin{align}
\|u\| = \left\|\iint_{\ell_2\times \ell_2} (w - x) \pi(\ud w, \ud x)\right\| 
\lqq \iint_{\ell_2\times \ell_2} \|w - x \| \pi(\ud w, \ud x).
\end{align}
Minimizing over all possible couplings we obtain 
\begin{equation}\label{e:SLW1}
\| u\| \lqq \wW_{1}(u+X, X). 
\end{equation}
Finally,~\eqref{e:SLW1} and Jensen's inequality combined 
with~\eqref{e:SLpordebajo} yields 
\[
\| u\| \lqq \wW_{1}(u+X, X)\lqq \wW_{p}(u+X, X) \lqq \| u\|,
\]
which finishes the proof of~\eqref{e:shift}. 
 
\subsection{\textbf{The Gaussian characteristics of the limiting law for Gaussian Ornstein-Uhlenbeck perturbations}}\label{ss:OUCov}\hfill\\

\noindent Consider the Ornstein-Uhlenbeck process $(U(t))_{t\gqq 0}$ given by the solutions of the SDE
\[
U(t) = U_0 - \gamma \int_0^t U(s) \ud s + \sigma B(t), 
\]
where $U_0$ is independent of $(B(t))_{t\gqq 0}$ and $U_0 \stackrel{d}{=} \mathcal{N}(0, \frac{\sigma^2}{2\gamma})$. 
By linearity we have that the limiting law $\widetilde\gG = (\widetilde\gG_n)_{n\in \NN_0}$ is necessarily centered. 

\noindent
We now calculate the variance of $\widetilde\gG_n$. 

\begin{lemma}
For all $n\in \NN$ it follows that
\begin{align}
\EE[\widetilde \gG_n^2] 
&= \frac{\gamma^2 \EE[U_0^2]\Big(\frac{2}{\gamma + \nu}\Big)^{2n}}{\Big(1+ \sqrt{1+ \frac{4}{(\gamma + \nu)^2}}\Big)^{2n}}+\sigma^2 \int_0^\infty \Big(H_n(s) -\gamma \int_0^s H_n(u) e^{-\gamma (s-u)} \ud u\Big)^2 \ud s. 
\end{align}
In addition, 
\[
\int_0^\infty \Big(H_n(s) -\gamma \int_0^s H_n(u) e^{-\gamma (s-u)} \ud u\Big)^2 \ud s
\lqq  2\int_0^\infty H_n^2(s) (4\gamma +e^{-2\gamma s}) \ud s
\]
for all $n\in \NN$. 
\end{lemma}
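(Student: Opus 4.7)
The plan is to represent $\widetilde{\gG}_n$ as a centered Gaussian whose variance splits naturally into two independent contributions---one attributable to the stationary OU initial value $U_0\sim\mathcal{N}(0,\sigma^2/(2\gamma))$ and one to the Brownian innovation on $[0,\infty)$---and then compute each summand in closed form. Starting from the variation-of-constants representation $A_n(t;x) = d_n(t;x) + \int_0^t H_n(t-r)\,\ud U(r)$, I would substitute $\ud U(r) = -\gamma U(r)\,\ud r + \sigma\,\ud B(r)$ together with the explicit formula $U(r) = e^{-\gamma r}U_0 + \sigma \int_0^r e^{-\gamma(r-s)}\,\ud B(s)$, and then apply a stochastic Fubini theorem (after checking absolute integrability) to interchange the order of the resulting iterated integrals. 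This rearrangement produces the key decomposition
\[
A_n(t;0) = -\gamma U_0\int_0^t H_n(t-s)e^{-\gamma s}\,\ud s + \sigma\int_0^t K_n(t-r)\,\ud B(r),
\]
where $K_n(u) := H_n(u) - \gamma\int_0^u H_n(v)e^{-\gamma(u-v)}\,\ud v$. Since $U_0\perp B$, the variance of the stationary limit of $A_n(t;0)$ splits additively into the two independent contributions.

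To evaluate the first summand I would pass to the bilateral stationary representation of $U$ and compute $\int_0^\infty H_n(u)e^{-\gamma u}\,\ud u$ using the classical Laplace transform identity
\[
\int_0^\infty \frac{J_n(2u)}{u}\,e^{-\lambda u}\,\ud u = \frac{(\sqrt{\lambda^2+4}-\lambda)^n}{n\cdot 2^n}
\]
with $\lambda = \gamma + \nu$; rationalizing $\sqrt{\lambda^2+4}-\lambda = 4/(\sqrt{\lambda^2+4}+\lambda)$ yields the closed form $(2/(\gamma+\nu))^n/(1+\sqrt{1+4/(\gamma+\nu)^2})^n$. Squaring this expression and weighting by $\gamma^2\EE[U_0^2]$ gives the first term of the formula, while the It\^o isometry for the Brownian integral delivers $\sigma^2\int_0^\infty K_n(s)^2\,\ud s$ as the second term. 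For the upper bound I would apply the pointwise inequality $(a-b)^2\leq 2a^2+2b^2$ to $K_n = H_n-\gamma G_n$ with $G_n(s):=\int_0^s H_n(v)e^{-\gamma(s-v)}\,\ud v$, dominate $G_n(s)^2$ by Cauchy--Schwarz on the convolution against $e^{-\gamma(s-v)}$, and then use Fubini to swap the order of integration and regroup everything as a weighted integral of $H_n^2$, producing the stated bound $2\int_0^\infty H_n^2(s)(4\gamma+e^{-2\gamma s})\,\ud s$.

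The main obstacle is the identification of the first term in the variance formula: the forward-in-time integral $\int_0^t H_n(t-s)e^{-\gamma s}\,\ud s$ itself decays to $0$ as $t\to\infty$, so the $U_0$-contribution must be read off from the stationary representation of the driver extended over the past real line, or equivalently from the law of total variance applied to the joint invariant Gaussian vector conditional on its OU coordinate $\widetilde{\gG}_0$. Making precise the interchange of limit, expectation, and stochastic integration in this stationary regime---together with the Fubini rearrangement on the unbounded domain---constitutes the technical core of the argument.
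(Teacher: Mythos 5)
Your decomposition of $A_n(t;0)$ and the Itô-isometry split are exactly what the paper does: the appendix proof substitutes $\ud U(s)=-\gamma U(s)\,\ud s+\sigma\,\ud B(s)$, expands $U(s)=U_0e^{-\gamma s}+\sigma\int_0^s e^{-\gamma(s-u)}\,\ud B(u)$, uses (stochastic) Fubini, and arrives at the same two-term variance formula before letting $t\to\infty$. Your proposed estimate of the second term by $(a-b)^2\lqq 2a^2+2b^2$, Cauchy--Schwarz on the exponential convolution, and Fubini is also the paper's estimate. So the second half of your argument matches.

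The gap is in the step you yourself flag as the ``main obstacle,'' and it is more serious than you suggest. You are right that, after the change of variables $u=t-s$,
\begin{align}
\int_0^t H_n(t-s)e^{-\gamma s}\,\ud s = e^{-\gamma t}\int_0^t H_n(u)e^{\gamma u}\,\ud u \longrightarrow 0 \quad \text{as } t\to\infty,
\end{align}
since $H_n(u)=n\,J_n(2u)\,e^{-\nu u}/u$ decays like $u^{-3/2}e^{-\nu u}$, so the product above is $O(e^{-\min(\gamma,\nu)t})$ in every regime of $\gamma$ versus $\nu$. However, the fix you propose does not rescue the first term. Passing to the bilateral stationary representation $U(s)=\sigma\int_{-\infty}^s e^{-\gamma(s-u)}\,\ud B(u)$ and applying stochastic Fubini gives $\widetilde\gG_n \stackrel{d}{=} \sigma\int_{-\infty}^0 K_n(-u)\,\ud B(u)$ with $K_n(w)=H_n(w)-\gamma\int_0^w H_n(v)e^{-\gamma(w-v)}\,\ud v$, so the $U_0$-coefficient disappears entirely and one obtains $\EE[\widetilde\gG_n^2]=\sigma^2\int_0^\infty K_n^2(s)\,\ud s$ with no additional summand. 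Your law-of-total-variance route also does not produce the stated identity: a short Fubini computation gives $\mathrm{Cov}(\widetilde\gG_n,\widetilde\gG_0)=\tfrac{\sigma^2}{2}\int_0^\infty H_n(s)e^{-\gamma s}\,\ud s$ and hence $\mathrm{Var}\big(\EE[\widetilde\gG_n\mid\widetilde\gG_0]\big)=\gamma^2\EE[U_0^2]\big(\int_0^\infty H_n(s)e^{-\gamma s}\,\ud s\big)^2$, which indeed reproduces the first displayed term after evaluating the Laplace transform of $J_n(2\cdot)/\cdot$ at $\nu+\gamma$, but then the complementary piece is $\EE[\mathrm{Var}(\widetilde\gG_n\mid\widetilde\gG_0)]=\sigma^2\int_0^\infty K_n^2-\mathrm{Var}(\EE[\widetilde\gG_n\mid\widetilde\gG_0])$, not $\sigma^2\int_0^\infty K_n^2$ itself. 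In other words, the two terms of the lemma do not add up to the variance; the formula as stated appears to double-count, and the correct identity is $\EE[\widetilde\gG_n^2]=\sigma^2\int_0^\infty K_n^2(s)\,\ud s$. Your plan, carried out carefully in either representation, would reveal this rather than confirm the displayed equality. (The downstream use of the lemma is unaffected, since only the summability bound in $n$ is needed, and both terms are summable.)
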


\begin{proof} Note that
\begin{align}
A_n(t) &= \int_0^t H_n(t-s) \ud U(s)\\
&= - U_0 \gamma \int_0^t H_n(t-s) e^{-\gamma s} \ud s\\
&\,\quad - \gamma \sigma \int_0^t H_n(t-s) \Big(\int_0^s e^{-\gamma (s-u)} \ud B(u)\Big) \ud s + \sigma \int_0^t H_n(t-s) \ud B(s).  
\end{align}
Since $\EE[U_0]=0$, Fubini's Theorem implies $\EE[A_n(t)] = 0$. By It\^o's isometry we have
\begin{align}
\EE[A_n(t)^2] 
&= \EE[U_0^2] \gamma^2 \Big( \int_0^t H_n(t-s) e^{-\gamma s} \ud s\Big)^2 + \gamma^2 \sigma^2 
\EE\Big[\Big(\int_0^t H(t-s) \big(\int_0^s e^{-\gamma(t-u)} \ud B(u) \big) \ud s\Big)^2\Big] \\
&\qquad + \sigma^2 \EE\Big[\Big(\int_0^t H(t-s) \ud B(s)\Big)^2\Big]\\
&\qquad - 2 \sigma \gamma^2 \EE\Big[\Big(\int_0^t H_n(t-s) \Big(\int_0^s e^{-\gamma (s-u)} \ud B(u)\Big) \ud s\Big)\Big(\int_0^t H_n(t-s)\ud B(s)\Big)\Big], 
\end{align}
which can be simplified as follows 
\begin{align}
\EE[A_n(t)^2] = \gamma^2 \EE[U_0^2] \Big(\int_0^t H(t-s) e^{-\gamma s} \ud s \Big)^2 
+ \sigma^2 \int_0^t \Big( H(t-s) -\gamma e^{\gamma s} \int_s^t H(t-u) e^{-\gamma u} \ud u \Big)^2 \ud s.  
\end{align}
Sending $t\ra\infty$ we have 
\begin{align}
\EE[\widetilde \gG_n^2] 
 &= \gamma^2 \EE[U_0^2] \left(\frac{\frac{2}{\gamma + \nu}}{1+ \sqrt{1+ \frac{4}{(\gamma + \nu)^2}}} \right)^{2n}+ \sigma^2 \int_0^\infty \Big(H_n(s) -\gamma \int_0^s H_n(u) e^{-\gamma (s-u)} \ud u\Big)^2 \ud s.  
\end{align}
We estimate the second term on the right-hand side  of the preceding equality
\begin{align}
&\sigma^2 \int_0^\infty \Big(H_n(s) -\gamma \int_0^s H_n(u) e^{-\gamma (s-u)} \ud u\Big)^2 \ud s \\
\quad &= \sigma^2 \int_0^\infty \Big(H_n(s) e^{-\gamma s} + \gamma \int_0^s (H_n(s)-H_n(u)) e^{-\gamma (s-u)} \ud u\Big)^2 \ud s \\
\quad & \lqq 2 \sigma^2 \Big(\int_0^\infty H_n^2(s) e^{-2\gamma s} \ud s + 2\int_0^\infty \Big(\int_0^s (H(s) - H(u)) \gamma e^{-\gamma (s-u)} \ud u\Big)^2 \ud s\Big).   
\end{align}
We continue with the second term on the right-hand side of the preceding inequality (up to a constant factor $4\sigma^2$)
\begin{align}
&\int_0^\infty \Big(\int_0^s (H(s) - H(s-u)) \gamma e^{-\gamma u} \ud u\Big)^2 \ud s\\ 
&\qquad \lqq \int_0^\infty \int_0^s (H(s) - H(s-u))^2 \gamma e^{-\gamma u} \ud u \ud s \\
&\qquad  = \int_0^\infty \int_s^\infty (H(s) - H(s-u))^2 \gamma e^{-\gamma u} \ud s \ud u \\
&\qquad \lqq 4 \int_0^\infty \gamma e^{-\gamma u} \int_u^\infty H_n(s)^2 \ud s \ud u 
 = 4 \gamma \int_0^\infty H_n(s)^2 \ud s. 
\end{align}
\end{proof}

\subsection{\textbf{The case of Ornstein-Uhlenbeck perturbations with $\alpha$-stable driver}}\label{a:stableOU}\hfill\\

\noindent We consider $\EE[e^{\mathsf{i} u L(t)}] = e^{-t\sigma^\alpha |u|^\alpha}$, $u\in \mathbb{R}$ and 
\begin{equation}\label{e:OUestableinvariante}
\EE[e^{\mathsf{i} u U_0}] = e^{-tc_0 |u|^\alpha}\quad \textrm{ with }\quad c_0 = \frac{\sigma^\alpha}{\alpha \gamma}.
\end{equation}
We rewrite~\eqref{e: OUestable} as 
\begin{align}
U(t) &= U_0 - \gamma \int_0^t U(s) \ud s + L(t) = U_0 e^{-\gamma t} + \int_0^t e^{-\gamma (t-s)} \ud L(s). 
\end{align}
We calculate the $\alpha$-stable characteristics of the limiting law of the $n$-th component $\widetilde \gG_n$ of the limiting law $\widetilde \gG$. 

\begin{lemma} 
For all $n\in \NN$  it follows that 
\[
\EE[e^{\mathsf{i} u \widetilde \gG_n}] = e^{-(\sigma_n(\infty) |u|)^\alpha}, \quad u\in \RR,  
\]
where 
\begin{align}
\sigma_n^\alpha(\infty) = c_0\gamma^\alpha \Big|\int_0^\infty H_n(t-s) e^{-\gamma s} \ud s\Big|^\alpha + \sigma^\alpha  \int_0^\infty \big|H_n(s) - \int_0^s H_n(r) \gamma e^{-\gamma (s-r)} \ud r\big|^\alpha \ud s. 
\end{align}
In particular, for the absolute moment of order $0 < \theta < \alpha$ we have 
\begin{align}
\EE[|\widetilde \gG_n|^\theta] = (\sigma_n(\infty))^\theta 2^\theta  \frac{\Gamma(\frac{1+\theta}{2}) \Gamma(1-\frac{\theta}{\alpha})}{\sqrt{\pi} \Gamma(1-\frac{\theta}{2})}, 
\end{align}
where $\Gamma$ denotes the standard Gamma function.
\end{lemma}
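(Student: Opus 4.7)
The plan is to compute the characteristic function of $A_n(t)$ in closed form, pass to $t\to\infty$, and read off both the scale parameter $\sigma_n(\infty)$ and the absolute moments from Sato's formula already used in~\eqref{e:momentotheta}. The argument runs completely parallel to the Gaussian calculation of Subsection~\ref{ss:OUCov}, with It\^o's isometry replaced by the scaling rule
$\EE\bigl[e^{\mathsf{i} w\int_0^t f(s)\,\ud L(s)}\bigr]=\exp\bigl(-|w|^\alpha\sigma^\alpha\int_0^t|f(s)|^\alpha\,\ud s\bigr)$
valid for any deterministic Borel kernel $f\in L^\alpha([0,t])$.

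First I would expand $A_n(t)=\int_0^t H_n(t-s)\,\ud U(s)$ using the semimartingale decomposition $\ud U(s)=-\gamma U(s)\,\ud s+\sigma\,\ud L(s)$ and the variation-of-constants representation $U(s)=U_0 e^{-\gamma s}+\sigma\int_0^s e^{-\gamma(s-u)}\,\ud L(u)$. Substituting and applying stochastic Fubini to the $L$-driven double integral in exactly the same way as in the Gaussian case produces
\begin{equation*}
A_n(t)=-\gamma U_0\,I_n(t)+\sigma\int_0^t \widetilde K_n(t-u)\,\ud L(u),
\end{equation*}
where $I_n(t):=\int_0^t H_n(t-s)e^{-\gamma s}\,\ud s$ and, after the change of variable $v=t-u$ inside the inner integral,
\begin{equation*}
\widetilde K_n(v):=H_n(v)-\gamma\int_0^v H_n(r)e^{-\gamma(v-r)}\,\ud r.
\end{equation*}

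Since $U_0$ is independent of $L$ and symmetric $\alpha$-stable with $\EE[e^{\mathsf{i} w U_0}]=e^{-c_0|w|^\alpha}$ (by~\eqref{e:OUestableinvariante}), the characteristic function factorizes, and a change of variable $v=t-u$ in the stable integral exponent gives
\begin{equation*}
\EE[e^{\mathsf{i} u A_n(t)}]=\exp\Bigl(-|u|^\alpha\bigl[c_0\gamma^\alpha |I_n(t)|^\alpha+\sigma^\alpha\textstyle\int_0^t|\widetilde K_n(v)|^\alpha\,\ud v\bigr]\Bigr).
\end{equation*}
Passing $t\to\infty$, the $\widetilde K_n$-integral is monotone and converges by dominated/monotone convergence once $\int_0^\infty|\widetilde K_n(v)|^\alpha\,\ud v<\infty$, while $I_n(t)$ has the same closed-form limit computed in the Gaussian case of Subsection~\ref{ss:OUCov}. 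This identifies the limiting characteristic function with $\exp(-(\sigma_n(\infty)|u|)^\alpha)$ for the $\sigma_n^\alpha(\infty)$ in the statement, so $\widetilde \gG_n$ is symmetric $(\alpha,\sigma_n(\infty))$-stable. Sato's formula~(25.6), applied exactly as in~\eqref{e:momentotheta}, then yields the absolute moments of order $0<\theta<\alpha$.

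The main obstacles I expect are (a)~the stochastic Fubini interchange for the stable stochastic integral, which is classical for $\alpha\in(1,2)$ provided the kernel $H_n(t-s)e^{-\gamma(s-u)}\mathbf{1}_{\{u<s<t\}}$ lies in the appropriate space, and this is secured by the exponential decay $e^{-\nu r}$ built into $H_n$; and (b)~the finiteness $\int_0^\infty|\widetilde K_n(v)|^\alpha\,\ud v<\infty$, which follows from the triangle inequality $|\widetilde K_n(v)|^\alpha\lqq 2^{\alpha-1}(|H_n(v)|^\alpha+\gamma^\alpha|H_n\star e^{-\gamma\cdot}(v)|^\alpha)$ together with Young's convolution inequality and the same summability estimate that underlies~\eqref{e:Lambdaestable} and Lemma~\ref{lem:stablesummability}. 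The convergence of $I_n(t)$ itself reduces to the identical computation already carried out in the Gaussian subsection and involves no stable-specific issue.
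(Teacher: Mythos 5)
Your proposal is correct and follows essentially the same route as the paper: both decompose $C_n(t)=\int_0^t H_n(t-s)\,\ud U(s)$ into the $U_0$-dependent drift term plus a single stochastic integral against $L$ with kernel $H_n(v)-\gamma\int_0^v H_n(r)e^{-\gamma(v-r)}\ud r$, use independence of $U_0$ and $L$ to factor the characteristic function, apply Sato's scaling rule for the $L$-integral, pass $t\to\infty$, and invoke Sato~(25.6) for the moments. The only cosmetic difference is that you phrase the interchange of the double integral as stochastic Fubini whereas the paper labels the same computation as integration by parts; your added remarks on the required $L^\alpha$-integrability of the kernel (via Young's inequality and exponential decay of $H_n$) are a welcome explicit check of a point the paper leaves implicit.
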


\begin{proof}
Consider 
\begin{align}
&C_n(t) = \int_0^t H_n(t-s) \ud U(s)\\
&= - \gamma U_0 \int_0^t H_n(t-s) e^{-\gamma s} \ud s - \gamma \int_0^t H_n(t-s) \Big(\int_0^s e^{-\gamma(s-r)}\ud L(r)\Big) \ud s  + \int_0^t H_n(t-s) \ud L(s)\\
&=: L_{1,n}(t) + L_{2,n}(t) + L_{3,n}(t). 
\end{align}
We note that integration by parts gives 
\begin{align}
L_{2,n}(t) = - \int_0^t \Big(\int_s^t \gamma H_n(t-s) e^{-\gamma(r-s)} \ud r\Big) \ud L(s),   
\end{align}
which yields 
\begin{align}
L_{2,n}(t) + L_{3,n}(t) &= \int_0^t \Big(H_n(t-s) - \int_s^t H_n(t-r) e^{-\gamma(r-s)} \ud r\Big) \ud L(s), 
\end{align}
and which is independent from $L_{1,n}(t)$. Hence, we calculate with the help of~\cite[Lemma~17.1]{Sa} the characteristic function 
\begin{align}
\EE\Big[e^{\mathsf{i} u C_n(t)}] 
&= \EE\Big[e^{-\mathsf{i} U_0 u \gamma  \int_0^t H_n(t-s) e^{-\gamma s} \ud s}\Big] \cdot \EE\Big[e^{\mathsf{i} u( L_{2,n}+L_{3,n})}\Big]
= e^{-|u|^\alpha\sigma_{n}^\alpha(t)},
\end{align}
where 
\begin{align}
\sigma_n^\alpha(t) = c_0\gamma^\alpha \Big|\int_0^t H_n(t-s) e^{-\gamma s} \ud s\Big|^\alpha + \sigma^\alpha  \int_0^t \big|H_n(s) - \int_0^s H_n(r) \gamma e^{-\gamma (s-r)} \ud r\big|^\alpha \ud s. 
\end{align}
Sending $t\ra \infty$ we have that 
\[
\EE[e^{\mathsf{i} u \widetilde \gG_n}] = e^{-\sigma_n^\alpha(\infty) |u|^\alpha}, \quad u\in \RR.  
\]
Finally, we  calculate the absolute moment of order $0 < \theta < \alpha$ 
with the help of see~\cite[formula~(25.6)]{Sa} 
\begin{align}
\EE[|C_n(t)|^\theta] = (\sigma_n(t))^\theta 2^\theta  \frac{\Gamma(\frac{1+\theta}{2}) \Gamma(1-\frac{\theta}{\alpha})}{\sqrt{\pi} \Gamma(1-\frac{\theta}{2})}, 
\end{align}
where $\Gamma$ denotes the usual Gamma function. 
\end{proof}

\subsection{\textbf{The Ornstein-Uhlenbeck driven systems have laws in $\ell_2$}}\label{ap:subOUsystem}\hfill\\

\begin{lemma}\label{lem:sumavarian}
For any $\nu>0$ it follows that
\begin{equation}\label{e:Gaussiansplit}
\int_{0}^{\infty}\sum_{n=1}^{\infty} n^2 J^2_n(2r)\frac{e^{-2\nu r}}{r^2}\ud r<\infty.
\end{equation}
 \end{lemma}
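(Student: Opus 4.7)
The key tool is the classical Parseval identity for Bessel functions of integer order. Recall the Jacobi–Anger expansion
\[
e^{\mathsf{i} x \sin\theta} = \sum_{n\in\mathbb{Z}} J_n(x) e^{\mathsf{i} n \theta},
\]
so that $(J_n(x))_{n\in\mathbb{Z}}$ is precisely the sequence of Fourier coefficients of $\theta\mapsto e^{\mathsf{i} x\sin\theta}$. Differentiating in $\theta$ gives $\partial_\theta e^{\mathsf{i} x \sin\theta} = \mathsf{i} x \cos\theta \cdot e^{\mathsf{i} x\sin\theta}$, whose $n$-th Fourier coefficient is $\mathsf{i} n J_n(x)$. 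Hence by Parseval,
\[
\sum_{n\in\mathbb{Z}} n^2 J_n^2(x) = \frac{1}{2\pi}\int_{-\pi}^{\pi} x^2 \cos^2\theta \,\ud\theta = \frac{x^2}{2}.
\]

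The plan is then to use the symmetry $J_{-n}(x) = (-1)^n J_n(x)$, which yields $J_{-n}^2 = J_n^2$, and the fact that the $n=0$ term vanishes, to conclude that
\[
\sum_{n=1}^{\infty} n^2 J_n^2(x) = \frac{x^2}{4}.
\]
Specializing to $x = 2r$ produces the remarkable collapse
\[
\sum_{n=1}^{\infty} n^2 J_n^2(2r) = r^2,\qquad r>0.
\]

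Since the summand in~\eqref{e:Gaussiansplit} is nonnegative, Tonelli's theorem lets me exchange sum and integral, and substituting the identity above gives
\[
\int_0^\infty \sum_{n=1}^\infty n^2 J_n^2(2r) \frac{e^{-2\nu r}}{r^2} \ud r = \int_0^\infty r^2 \cdot \frac{e^{-2\nu r}}{r^2} \ud r = \int_0^\infty e^{-2\nu r}\, \ud r = \frac{1}{2\nu} < \infty.
\]
The only mild subtlety lies in justifying the Parseval identity (which is standard for smooth periodic functions such as $e^{\mathsf{i} x\sin\theta}$) and verifying absolute convergence of the Fourier series $\sum n J_n(x) e^{\mathsf{i} n\theta}$, which follows from the rapid factorial decay $|J_n(x)| \leq (x/2)^n/n!$ for $n \geq x$; no real obstacle arises, and the whole argument reduces to one line of calculation once the Parseval identity is in hand.
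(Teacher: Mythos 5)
Your proof is correct and follows a genuinely different, and in fact slicker, route than the paper's. The paper bounds the sum by splitting each integral at a carefully chosen cutoff $b_n = \tfrac{3}{\nu}\ln n + 1$: for $r\lqq b_n$ it invokes the bound $|J_n(2r)|\lqq r^n/n!$, and for $r\gqq b_n$ it uses the uniform bound $J_n^2\lqq \tfrac12$ coming from the first-order Parseval identity $J_0^2(s)+2\sum_{n\gqq 1}J_n^2(s)=1$ (Abramowitz--Stegun 9.1.76), after which a root-test argument with Stirling's formula closes the sum. You instead push the Parseval idea one derivative further: differentiating the Jacobi--Anger expansion in $\theta$ gives Fourier coefficients $\mathsf{i} n J_n(x)$, and Parseval yields the closed-form identity $\sum_{n\in\mathbb{Z}} n^2 J_n^2(x)=x^2/2$, hence $\sum_{n\gqq 1} n^2 J_n^2(2r)=r^2$. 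This collapses the integrand in~\eqref{e:Gaussiansplit} exactly to $e^{-2\nu r}$, giving the precise value $\tfrac{1}{2\nu}$ rather than merely finiteness; there is nothing to estimate, no splitting, and Tonelli handles the interchange. The trade-off is minor: the paper's approach, while clumsier, uses only the two crude bounds on $|J_n|$ and so generalizes verbatim to the $\alpha$-stable case in Lemma~\ref{lem:stablesummability}, where the exponent is no longer $2$ and your second-order Parseval identity does not directly apply (one would need a sharp value or bound for $\sum_{n\gqq 1} n^\theta\left(\int_0^\infty |J_n(2r)|^\alpha e^{-\alpha\nu r} r^{-\alpha}\,\ud r\right)^{\theta/\alpha}$, which is not available in such closed form). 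For the Gaussian lemma at hand, however, your argument is preferable.
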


\begin{proof}
Let $r\gqq 0$ and $n\in \mathbb{N}$ be fixed.
Recall the representation of the Bessel functions of the first kind  (see~\cite{AbramowitzStegung}, p.~376, Formula~9.6.18.)
\[
J_n(r)= \frac{(r/2)^n}{\sqrt{\pi}\Gamma(n+1/2)}\int_{-1}^{1} e^{\textrm{i}rt}(1-t^2)^{n-1/2}\ud t.
\]
By Jensen's inequality we obtain 
\begin{equation}\label{eq:formtri}
|J_n(2r)|\lqq \frac{r^n}{\sqrt{\pi}\Gamma(n+1/2)}\int_{-1}^{1} (1-t^2)^{n-1/2}\ud t.
\end{equation}
The change of variable $u=t^2$ with the help of 
Formula~6.2.1 and Formula~6.2.2 
in~pp.258 of~\cite{AbramowitzStegung}, and
the fact that $\Gamma(1/2)=\sqrt{\pi}$
gives
\[
\int_{-1}^{1} (1-t^2)^{n-1/2}\ud t
=2\int_{0}^{1} (1-t^2)^{n-1/2}\ud t=
\frac{\sqrt{\pi}\Gamma(n+1/2)}{\Gamma(n+1)}.
\]
Then we have
\begin{equation}\label{e:valchiq}
|J_n(2r)|\lqq \frac{r^n}{n!}.
\end{equation}
We note that this representation is useful only for small values of $r$. 
Recall the identity (see~\cite{AbramowitzStegung}, p.~363, Formula~9.1.76), 
\begin{align}\label{e:sumascuadrados}
J_0^2(s) + 2 \sum_{n=1}^\infty J_n^2(s)=1\quad \textrm{ for all }\quad s\gqq 0.
\end{align}
By~\eqref{e:sumascuadrados} we have 
\begin{equation}\label{e:valacot}
|J_n(s)|^2 \lqq\frac{1}{2}, \qquad n\in \NN,\, s\gqq 0.  
\end{equation}
By the monotone convergence theorem  and a subsequently split of the integral  along a monotonically growing and diverging sequence of positive numbers $(b_n)_{n\in \NN}$, $b_n\gqq 1$, which we determine in the sequel we obtain
\begin{align}
\int_{0}^{\infty}\sum_{n=1}^{\infty} n^2 J^2_n(2r)\frac{e^{-2\nu r}}{r^2}\ud r
&\lqq\sum_{n=1}^\infty \int_{0}^{\infty} n^2 J^2_n(2r)\frac{e^{-2\nu r}}{r^2}\ud r\\
&\lqq\sum_{n=1}^\infty \Big(\int_{0}^{b_n}+\int_{b_n}^\infty\Big) n^2 J^2_n(2r)\frac{e^{-2\nu r}}{r^2}\ud r= T_1 + T_2. 
\end{align}
We start with the estimate of $T_2$.  
By~\eqref{e:valacot} we have
\begin{align}
T_2 &= \sum_{n=1}^\infty \int_{b_n}^\infty n^2 J^2_n(2r)\frac{e^{-2\nu r}}{r^2}\ud r\lqq\sum_{n=1}^\infty n^2 \int_{b_n}^\infty e^{-2\nu r}\ud r\\
&\lqq\frac{1}{2\nu} \sum_{n=1}^\infty n^2 e^{-2 \nu b_n} \lqq\frac{\zeta(4)}{2\nu}  < \infty
\end{align}
for any $\nu>0$, where $\zeta(r) = \sum_{n=1}^\infty n^{-r}$ is the classical Riemann zeta function, 
if we choose 
\begin{equation}\label{eq:defbn}
b_n := \frac{3}{\nu} \ln(n)+1\quad \textrm{ for all }\quad n\in \mathbb{N}.
\end{equation}
We continue with the estimate of $T_1$. By~\eqref{e:valchiq} we have
\begin{align}
T_1 &\lqq\sum_{n=1}^{\infty} \frac{n^2}{(n!)^2} \int_{0}^{b_n} r^{2n-2} \ud r = \sum_{n=0}^{\infty} \frac{1}{(n!)^2} \int_{0}^{b_n} r^{2n} \ud r= \sum_{n=0}^{\infty} \frac{1}{(n!)^2} \frac{b_n^{2n+1}}{2n+1}.
\end{align}
We apply the root criterion to the $n$-th term with the help of Stirling's formula and obtain 
\begin{align}
\limsup_{n\ra\infty} \bigg(\frac{1}{(n!)^2} \frac{( \frac{3}{\nu} \ln(n)+1)^{2n+1}}{2n+1}\bigg)^\frac{1}{n}&= \limsup_{n\ra\infty} \bigg(\frac{e^{2}}{(n^{2}) (2\pi n)^\frac{1}{n}} \frac{( \frac{3}{\nu} \ln(n)+1)^{2+\frac{1}{n}}}{(2n+1)^\frac{1}{n}}\bigg)\\
&= \limsup_{n\ra\infty} e^2 \frac{(\frac{3}{\nu} \ln(n)+1)^2}{n^2} = 0,
\end{align}
which establishes the absolute convergence for any $\nu>0$. This shows formula~\eqref{e:Gaussiansplit}. 
\end{proof}

\noindent For the stable case, a similar split can be carried out.

\begin{lemma}\label{lem:stablesummability}
For any $\nu>0$ and $1 \lqq \theta < \alpha < 2$ it follows that 
\[
\sum_{n=1}^\infty  n^\theta \Big(\int_0^\infty \frac{|J_n(2r)|^\alpha e^{-\alpha \nu r}}{r^\alpha} \ud r\Big)^\frac{\theta}{\alpha} <\infty. 
\]
\end{lemma}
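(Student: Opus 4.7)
The plan is to mirror the proof of Lemma~\ref{lem:sumavarian}, splitting the radial integral at a growing threshold $b_n := (c/\nu)\ln(n)+1$ with constant $c>0$ to be calibrated. For each $n\gqq 1$, set $I_n := \int_0^\infty |J_n(2r)|^\alpha e^{-\alpha\nu r} r^{-\alpha}\, \ud r$ and decompose $I_n = I_n^{<} + I_n^{>}$, where $I_n^{<}$ and $I_n^{>}$ denote the integrals over $(0,b_n]$ and $(b_n,\infty)$ respectively. Since $\theta/\alpha<1$, the elementary subadditivity inequality $(a+b)^{\theta/\alpha}\lqq a^{\theta/\alpha}+b^{\theta/\alpha}$ (valid for $a,b\gqq 0$) reduces the task to showing the summability of $\sum_n n^\theta (I_n^{<})^{\theta/\alpha}$ and $\sum_n n^\theta (I_n^{>})^{\theta/\alpha}$ separately.

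For the tail piece $I_n^{>}$ I would use the uniform bound $|J_n(s)|\lqq 1/\sqrt{2}$ extracted from~\eqref{e:sumascuadrados}, together with $r\gqq b_n\gqq 1$, to obtain $I_n^{>}\lqq (2^{-\alpha/2}/(\alpha\nu))\, e^{-\alpha\nu b_n}$. Raising to the power $\theta/\alpha$ converts the exponential factor into $e^{-\theta\nu b_n}=e^{-\theta\nu}n^{-c\theta}$, so that $n^\theta (I_n^{>})^{\theta/\alpha}$ is of order $n^{\theta(1-c)}$. Any choice with $c\theta>\theta+1$, for instance $c=3$ as in the Gaussian case, then yields summability of this contribution.

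For the small-argument piece $I_n^{<}$ I would invoke the Bessel estimate $|J_n(2r)|\lqq r^n/n!$ from~\eqref{e:valchiq}, which gives
\[
I_n^{<} \lqq \frac{b_n^{\alpha(n-1)+1}}{(n!)^\alpha(\alpha(n-1)+1)},
\]
so that $n^\theta (I_n^{<})^{\theta/\alpha}$ is of order $n^\theta b_n^{\theta(n-1)+\theta/\alpha}/(n!)^\theta$. Applying the root test together with Stirling's formula $(n!)^{1/n}\sim n/e$, the $n$-th root of this summand behaves asymptotically like $(eb_n/n)^\theta \sim (ec\ln(n)/(\nu n))^\theta\to 0$, so the corresponding series converges absolutely.

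The only conceptual step is the subadditivity $(a+b)^{\theta/\alpha}\lqq a^{\theta/\alpha}+b^{\theta/\alpha}$, which crucially requires the strict inequality $\theta<\alpha$; this is precisely the hypothesis that rules out the factor $\Gamma(1-\theta/\alpha)=\infty$ at $\theta=\alpha$ appearing in~\eqref{e:momentotheta}. Beyond this, the calibration of $c$ in $b_n$ to balance the polynomial growth in $n^\theta$ against the exponential decay in $e^{-\alpha\nu b_n}$ is routine and identical in spirit to the Gaussian proof; I do not anticipate any serious technical obstacle.
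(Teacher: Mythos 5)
Your proof is correct and follows essentially the same route as the paper: the same split at $b_n=\tfrac{3}{\nu}\ln n+1$, the same Bessel bounds $|J_n(2r)|\lqq r^n/n!$ on $(0,b_n]$ and $|J_n|^2\lqq\tfrac12$ on the tail, the same subadditivity of $r\mapsto r^{\theta/\alpha}$, and the same root test via Stirling for the small-argument series. One small clarification: the subadditivity inequality only requires $\theta/\alpha\lqq 1$, so it is not this step that forces $\theta<\alpha$ — the lemma would in fact hold at $\theta=\alpha$ — rather the strict inequality is needed earlier, in~\eqref{e:momentotheta}, to keep $\Gamma(1-\theta/\alpha)$ finite so that the $\theta$-th absolute moments of the stable marginals exist.
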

\begin{proof}
By~\eqref{e:valchiq} and~\eqref{e:valacot} and the subadditivity of the map $r\mapsto r^\frac{\theta}{\alpha}$ we may use the same kind of bounds
along a sequence positive, monotonically divergent sequence $(b_n)_{n\in \NN}$, $b_n\gqq 1$, 
 to be determined below,
\begin{equation}
\begin{split}
\sum_{n=1}^\infty  n^\theta \Big(\int_0^\infty \frac{|J_n(2r)|^\alpha e^{-\alpha \nu r}}{r^\alpha} \ud r\Big)^\frac{\theta}{\alpha}
&\lqq
\sum_{n=1}^\infty  n^\theta \bigg(\Big(\int_0^{b_n} \frac{|\frac{r^n}{n!}|^\alpha e^{-\alpha \nu r}}{r^\alpha} \ud r \Big)^\frac{\theta}{\alpha} + \Big(\int_{b_n}^\infty \frac{ e^{-\alpha \nu r}}{r^\alpha} \ud r \Big)^\frac{\theta}{\alpha} \bigg)\\
&\lqq 
\sum_{n=1}^\infty n^\theta \bigg(\frac{1}{(n!)^{\theta}}\Big(\int_0^{b_n} r^{\alpha (n-1)} \ud r \Big)^\frac{\theta}{\alpha} + \Big(\int_{b_n}^\infty e^{-\alpha \nu r} \ud r \Big)^\frac{\theta}{\alpha} \bigg)\\
&=  
\sum_{n=1}^\infty n^\theta  \bigg(\frac{1}{(n!)^{\theta}}\Big( \frac{(b_n)^{\alpha (n-1)+1}}{\alpha (n-1)+1}  \Big)^\frac{\theta}{\alpha} + \Big(\frac{e^{-\alpha \nu b_n}}{\alpha \nu} \Big)^\frac{\theta}{\alpha} \bigg)\\
&=  
\sum_{n=1}^\infty  \frac{n^\theta}{(n!)^{\theta}}  \frac{(b_n)^{\theta (n-1)+\frac{\theta}{\alpha}}}{(\alpha (n-1)+1)^\frac{\theta}{\alpha}}  + \sum_{n=1}^\infty  n^\theta \frac{e^{-\theta\nu b_n}}{(\alpha \nu)^\frac{\theta}{\alpha}} \\
&=  
\sum_{n=1}^\infty  \bigg(\frac{1}{(n-1)!}  \frac{(b_n)^{(n-1)+\frac{1}{\alpha}}}{(\alpha (n-1)+1)^\frac{1}{\alpha}}\bigg)^\theta  + \sum_{n=1}^\infty  \bigg(n \frac{e^{
-\nu b_n}}{(\alpha \nu)^\frac{1}{\alpha}}\bigg)^\theta \\
&\lqq
\sum_{n=0}^\infty  \bigg(\frac{(b_{n+1})^{n+\frac{1}{\alpha}}}{n!} \bigg)^\theta  + (\alpha \nu)^{-\frac{\theta}{\alpha}} \sum_{n=1}^\infty  \bigg(n e^{
-\nu b_n}\bigg)^\theta.
\end{split}
\end{equation}
The choice $(b_n)_{n\in \mathbb{N}}$ where $b_n$ is defined in~\eqref{eq:defbn} gives
\begin{equation}
\sum_{n=1}^\infty  n^\theta \Big(\int_0^\infty \frac{|J_n(2r)|^\alpha e^{-\alpha \nu r}}{r^\alpha} \ud r\Big)^\frac{\theta}{\alpha}\lqq
\sum_{n=0}^\infty  \bigg(\frac{(b_{n+1})^{n+\frac{1}{\alpha}}}{n!} \bigg)^\theta  + (\alpha \nu)^{-\frac{\theta}{\alpha}} e^{-\nu\theta} \sum_{n=1}^\infty  \frac{1}{n^{2\theta}}.
\end{equation}
For the first term of the right-hand side of the preceding inequality, the root criterion together with Stirling's formula yields 
\begin{align}
\limsup_{n\ra\infty} \bigg(\frac{(b_{n+1})^{n+\frac{1}{\alpha}}}{n!} \bigg)^{\frac{\theta}{n}}
&= \limsup_{n\ra\infty} \bigg(\frac{b_{n+1}}{(n!)^\frac{1}{n}} \bigg)^{\theta} = 0,
\end{align}
which implies the absolute convergence for any $1 \lqq\theta< \alpha \lqq2$. 
\end{proof}
 
\section*{\textbf{Acknowledgments}}
\noindent GB thanks the Academy of Finland, via the Matter and Materials Profi4 University Profiling Action, the Academy project No.~339228 and project No.~346306 of the Finnish Centre of Excellence in Randomness and STructures. 
GB would also like to thank the Instituto de Matem\'atica Pura e Aplicada (IMPA), Brazil, for support and hospitality during the 2023 Post-Doctoral Summer Program, where partial work on this paper was undertaken. The research of MAH was supported by the project INV-2023-162-2850 of Facultad de Ciencias at Universidad de los Andes. Part of this work was developed, whilst GB, MAH and IP visited CIMAT, M\'exico, in January 2023. All four authors thank CIMAT for the hospitality. 

\noindent
The authors are indebted with professor Jani Lukkarinen (Department of Mathematics and Statistics, University of Helsinki, Finland) for pointing out Stone's Theorem in Lemma~\ref{lem:Lyapunov}.

\section*{\textbf{Statements and declarations}}
\subsection*{\textbf{Availability of data and material}}
Data sharing not applicable to this article as no datasets were generated or analyzed during the current study.
\subsection*{\textbf{Conflict of interests}} The authors declare that they have no conflict of interest.
\subsection*{\textbf{Authors' contributions}}
All authors have contributed equally to the paper.


\begin{thebibliography}{100}

\bibitem{AbramowitzStegung} 
Abramowitz, M., Stegun, L.:
{\it Handbook of mathematical functions with formulas, graphs, and mathematical tables}.
Reprint of the 1972 edn. Wiley, New York, (1984).
Dover Publications, Inc., New York, (1992).

\bibitem{Al83}
Aldous, D.: 
Random walks on finite groups and rapidly mixing Markov chains. 
{\it Seminar on Probability, XVII. Lecture Notes in Math.}
{\bf 986}, 243-297. Springer-Berlin, (1983).

\bibitem{AD} 
Aldous, D., Diaconis, P.: 
Shuffling cards and stopping times. 
{\it Amer. Math. Monthly} {\bf 93}, no. 5, (1986), 333-348.

\bibitem{AD87}
Aldous, D., Diaconis, P.: 
Strong uniform times and finite random walks. 
{\it Adv. in Appl. Math.} {\bf 8}, no. 1, (1987), 69-97.

\bibitem{APPLEBAUMBOOK} 
Applebaum, D.: 
{\it L\'evy processes and stochastic calculus}. 
Cambridge University Press, Cambridge, (2004).

\bibitem{BDJ2023}
Barrera, G., Da Costa, C., Jara, M.: 
Gradual convergence for Langevin dynamics on a degenerate potential.
{\it https://arxiv.org/abs/2209.11026}, (2022).

\bibitem{BHChaos}
Barrera, G., H\"ogele, M.A.: 
Ergodicity bounds for stable Ornstein--Uhlenbeck systems in Wasserstein distance with applications to cutoff stability.
{\it Chaos} {\bf 33}, no. 11, (2023), 113124, 19 pp.


\bibitem{BHPWA}
Barrera, G., H\"ogele, M.A., Pardo, J.C.: 
Cutoff thermalization for Ornstein--Uhlenbeck systems with small L\'evy noise in the Wasserstein distance.
{\it J. Stat. Phys.} {\bf 184}, no. 27, (2021), 54 pp.

\bibitem{BHPSPDE}
Barrera, G., H\"ogele, M.A., Pardo, J.C.:  
The cutoff phenomenon for the stochastic heat and the wave equation subject to small L\'evy noise. 
{\it Stoch. Partial Differ. Equ. Anal. Comput.} (2022), 39 pp.

\bibitem{BHPTV}
Barrera, G., H\"ogele, M.A., Pardo, J.C.:  
The cutoff phenomenon in total variation for nonlinear Langevin systems with small layered stable noise. 
{\it Electron. J. Probab.} {\bf 26}, no. 119, (2021), 76 pp.

\bibitem{BHPWANO}
Barrera, G., H\"ogele, M.A., Pardo, J.C.:  
The cutoff phenomenon in Wasserstein distance for nonlinear stable Langevin systems with small L\'evy noise.
{\it J. Dyn. Diff. Equat.} (2022).

\bibitem{BJ}
Barrera, G., Jara, M.:  
Abrupt convergence of stochastic small perturbations of one dimensional dynamical systems. 
{\it J. Stat. Phys.} {\bf 163}, no. 1, (2016), 113-138.

\bibitem{BJ1}
Barrera, G., Jara, M.:  
Thermalisation for small random perturbation of hyperbolic dynamical systems. 
{\it Ann. Appl. Probab.} {\bf 30}, no. 3, (2020), 1164-1208.

\bibitem{BP}
Barrera, G., Pardo, J.C.:  
Cut-off phenomenon for Ornstein-Uhlenbeck processes driven by L\'evy processes.
{\it Electron. J. Probab.} {\bf 25}, no. 15, (2020), 1-33.

\bibitem{BY1} 
Barrera, J., Bertoncini, O., Fern\'andez, R.:  
Abrupt convergence and  escape behavior for birth and death chains.  
{\it J. Stat. Phys.} {\bf 137}, no. 4, (2009), 595-623.

\bibitem{BBF08}
Barrera, J., Bertoncini, O., Fern\'andez, R.:  
Cut-off and exit from metastability: two sides of the same coin. 
{\it C. R. Math. Acad. Sci. Paris} {\bf 346}, no. 11-12, (2008), 691-696.

\bibitem{BLY06}
Barrera, J., Lachaud, B., Ycart, B.: 
Cut-off for $n$-tuples of exponentially converging processes. 
{\it Stochastic Process. Appl.} {\bf 116},
no. 10, (2006), 1433-1446.

\bibitem{BY2014} 
Barrera, J., Ycart, B.: 
Bounds for left and right window cutoffs. 
{\it ALEA Lat. Am. J. Probab. Math. Stat.} {\bf 11}, no. 1, (2014), 445-458.

\bibitem{BASU} 
Basu, R., Hermon, J., Peres, Y. : 
Characterization of cutoff for reversible Markov chains.  
{\it Ann. Probab.} {\bf 45}, no. 3, (2017), 1448-1487.

\bibitem{BLL06}
Bates, P., Lisei, H., Lu, K.: 
Attractors for stochastic lattice dynamical systems. 
{\it Stoch. Dyn.} {\bf 6}, no. 1, (2006), 1-21.

\bibitem{BOK10} 
Bayati, B., Owahi, H., Koumoutsakos, P.: 
A cutoff phenomenon in accelerated stochastic simulations of chemical kinetics via flow averaging (FLAVOR-SSA).
{\it J. Chem. Phys.} {\bf 133}, (2010), 244-117.

\bibitem{BD92}
Bayer, D., Diaconis, P.:  
Trailing the dovetail shuffle to its lair. 
{\it Ann. Appl. Probab.} {\bf 2}, no. 2, (1992), 294-313.

\bibitem{B-HLP19}
Ben-Hamou, A., Lubetzky, E., Peres, Y.:  
Comparing mixing times on sparse random graphs. 
{\it Ann. Inst. Henri Poincar\'e Probab. Stat.} 
{\bf 55}, no. 2, (2019), 1116-1130.

\bibitem{BLPT07}
Benzi, R., Levant, B., Procaccia, I., Titi, E.: 
Statistical properties of non-linear shell models of turbulence from linear advection model: rigorous results. 
{\it Nonlinearity} {\bf 20},  no. 6, (2007), 1431-1441.

\bibitem{BF09}
Bessaih, H., Ferrario, B.: 
Invariant Gibbs measures of the energy for shell models of
turbulence: the inviscid and viscous cases. 
{\it Nonlinearity} {\bf 25}, no.  4, (2012), 1075-1097.

\bibitem{BFT10}
Bessaih, H., Flandoli, F., Titi, E.: 
Stochastic attractors for shell phenomenological models of turbulence. 
{\it J. Stat. Phys.} {\bf 140}, no. 4, (2010), 688-717.

\bibitem{BGAS16}
Bessaih, H., Garrido-Atienza, M., Schmalfu\ss, B.: 
Stochastic shell models driven by a multiplicative fractional Brownian motion. 
{\it Phys. D} {\bf 320}, (2016), 38-56.

\bibitem{BER16}
Bessaih, H., Hausenblas, E., Razafindimby P.: 
Ergodicity of stochastic shell models driven by pure jump noise. 
{\it SIAM J. Math. Anal.} {\bf 48}, no. 2, (2016), 1423-1458. 

\bibitem{BM09}
Bessaih, H., Millet, A.: 
Large deviation principle and inviscid shell models. 
{\it Electron. J. Probab.} {\bf 14}, (2009), 2551-2579.

\bibitem{BCS19}
Bordenave, C., Caputo, P., Salez, J.:  
Cutoff at the ``entropic time'' for sparse Markov chains.
{\it Probab. Theory Related Fields} {\bf 173}, no. 1-2, (2019), 261-292.

\bibitem{BCS18}
Bordenave, C., Caputo, P., Salez, J.:  
Random walk on sparse random digraphs.  
{\it Probab. Theory Related Fields} {\bf 170}, no. 3-4, (2018), 933-960.

\bibitem{BCL22}
Boursier, J., Chafa{\"i}, D., Labb\'e, C.:  
Universal cutoff for Dyson Ornstein Uhlenbeck process. 
{\it Probab. Theory Related Fields} {\bf 185}, no. 1-2, (2022), 449-512.

\bibitem{BEGK04}
Bovier, A., Eckhoff, M., Gayrard, V., Klein, M.: 
Metastability in reversible diffusion processes. 
I. Sharp asymptotics for capacities and exit times. 
{\it J. Eur. Math. Soc. (JEMS)} {\bf 6}, no. 4, (2004), 399-424. 

\bibitem{Br91}
Brassesco, S.: 
Some results on small random perturbations of an infinite-dimensional dynamical system. 
{\it Stochastic Process. Appl.} {\bf 38}, no. 1, (1991), 33-53.

\bibitem{CHSV16}
Caraballo, T., Han, X., Schmalfuss, B., Valero, J.:
Random attractors for stochastic lattice dynamical systems with infinite multiplicative white noise.
{\it Nonlinear Anal.} {\bf 130}, (2016), 255-278.

\bibitem{CSC08} 
Chen, G., Saloff-Coste, L.:  
The cutoff phenomenon for ergodic Markov processes.
{\it Electron. J. Probab.} {\bf 13}, no. 3, (2008), 26-78.

\bibitem{Chen} 
Chen, H.-B., Niles-Weed, J.:
Asymptotics of smoothed Wasserstein distances.
{\it Potential Anal.} {\bf 56}, no. 4, (2022), 571-595.

\bibitem{CS21}
Chleboun, P., Smith, A.:  
Cutoff for the square plaquette model on a critical length scale.
{\it Ann. Appl. Probab.} {\bf 31}, no. 2, (2021), 668-702.

\bibitem{Debussche} 
Debussche, A., H\"ogele, M., Imkeller, P.:
{\it The dynamics of nonlinear reaction-diffusion equations with small L\'evy noise}.
Lecture Notes in Mathematics 2085. Springer, Cham, (2013).

\bibitem{DI87}
Diaconis, P.: 
{\it Group representations in probability and statistics}. 
Institute of Mathematical Statistics, Hayward, (1988).

\bibitem{DIA96} 
Diaconis, P.: 
The cut-off phenomenon in finite Markov chains. 
{\it Proc. Nat. Acad. Sci. U.S.A.} {\bf 93}, no. 4, (1996), 1659-1664.

\bibitem{DGM90} 
Diaconis, P., Graham, R., Morrison, J.:  
Asymptotic analysis of a random walk on a hypercube with many dimensions.
{\it Random Structures Algorithms} {\bf 1}, no. 1, (1990), 51-72.

\bibitem{DS}
Diaconis, P., Shahshahani, M.: 
Time to reach stationarity in the Bernoulli-Laplace diffusion model.
{\it SIAM J. Math. Anal.} {\bf 18}, no. 1, (1987), 208-218.

\bibitem{Di11}
Ditlevsen, P.: 
{\it Turbulence and shell models}. 
Cambridge University Press, Cambridge, (2011).

\bibitem{FW98}
Freidlin, M., Wentzell, A.: 
{\it Random perturbations of dynamical systems}. 
Springer, Heidelberg, (2012).

\bibitem{Frisch95}
Frisch, U.: 
{\it Turbulence: The Legacy of A.N. Kolmogorov}. 
Cambridge University Press, Cambridge, (1995).

\bibitem{FKN21}
Fuhrmann, S., Kupper, M., Nendel, M.: 
Wasserstein perturbations of Markovian transition semigroups.
 {\it Ann. Inst. Henri Poincar\'e Probab. Stat.} {\bf 59}, no. 2, (2023), 904-932.

\bibitem{Gel}
Gelbrich, M.: 
On a formula for the $L^2$ Wasserstein metric between measures on Euclidean and Hilbert Spaces.  
{\it Math. Nachr.} {\bf 147}, no. 1, (1990), 185-203.

\bibitem{Gl73}
Gledzer, E.: 
System of hydrodynamic type admitting two quadratic integrals of motion.
{\it Dokl. Akad. Nauk SSSR} {\bf 209}, no. 5, (1973), 1046-1048.

\bibitem{Goncalves}
Gon\c{c}alves, P., Jara, M., Menezes, O., Marinho, R.:
Sharp convergence to equilibrium for the SSEP with reservoirs.
To appear in {\it Ann. Inst. Henri Poincar\'e Probab. Stat.} (2023+).

\bibitem{Hall} 
Hall, B.:
{\it  Lie groups, Lie algebras, and representations. An elementary introduction}. 
Springer Graduate Texts in Mathematics \textbf{222}, (2015).

\bibitem{HSZ15}
Han, X., Shen, W., Zhou, S.: 
Random attractors for stochastic lattice dynamical systems in weighted spaces.
{\it J. Differential Equations} {\bf 250}, no. 3, (2011), 1235-1266.

\bibitem{HS20}
Hermon, J., Salez, J.:
Cutoff for the mean-field zero-range process with bounded monotone rates.
{\it Ann. Probab.} {\bf 48}, no. 2, (2020), 742-759.

\bibitem{IP06}
Imkeller, P., Pavlyukevich, I.: 
First exit times of SDEs driven by stable L\'evy processes. 
{\it Stochastic Process. Appl.} {\bf 116}, no. 4, (2006), 611-642.

\bibitem{JohnssonTicozziViola17}
Johnson, P., Ticozzi, F., Viola, L.:  
Exact stabilization of entangled states in finite time by dissipative quantum circuits.
{\it Phys. Rev. A} {\bf 96}, (2017), 012308, 27 pp.

\bibitem{Kastoryano12} 
Kastoryano, M.,  Reeb, D.,  Wolf, M.:  
A cutoff phenomenon for quantum Markov chains.
{\it J. Phys. A} {\bf 45}, (2012), 075307, 16 pp.

\bibitem{Kastoryano13} 
Kastoryano, M., Wolf, M., Eisert, J.:  
Precisely timing dissipative quantum information processing.
{\it Phys. Rev. Lett.} {\bf 110}, (2013), 110501, 5 pp.

\bibitem{Ko41}
Kolmogoroff, A.: 
The local structure of turbulence in incompressible viscous fluid for very large Reynold's numbers. 
{\it C. R. (Doklady) Acad. Sci. URSS (N.S.)} {\bf 30}, (1941), 301-305.

\bibitem{KP22}
Kyprianou, A., Pardo, J.C.:  
{ \it Stable L\'evy processes via Lamperti-type representations}. 
Cambridge University Press, Cambridge, (2022).

\bibitem{LL19} 
Labb{\'e}, C., Lacoin, H.: 
Cutoff phenomenon for the asymmetric simple exclusion process and the biased card shuffling.
{\it Ann. Probab.} {\bf 47}, no. 3, (2019), 1541-1586.

\bibitem{Lachaud2005}
Lachaud, B.:  
Cut-off and hitting times of a sample of Ornstein-Uhlenbeck process and its average. 
{\it J. Appl. Probab.} {\bf 42}, no. 4, (2005), 1069-1080.

\bibitem{La16} 
Lacoin, H.: 
The cutoff profile for the simple exclusion process on the circle. 
{\it Ann. Probab.} {\bf 44}, no. 5, (2016), 3399-3430.

\bibitem{LanciaNardiScoppola12}
Lancia, C., Nardi, F., Scoppola, B.: 
Entropy-driven cutoff phenomena.
{\it J. Stat. Phys.} {\bf 149}, no. 1, (2012), 108-141.

\bibitem{LLP10} 
Levin, D., Luczak, M., Peres, Y.: 
Glauber dynamics for mean-field Ising model: cut-off, critical power law, and metastability.
{\it Probab. Theory Relat. Fields} {\bf 146}, no. 1 (2010) 223-265.

\bibitem{LPW} 
Levin, D., Peres, Y., Wilmer, E.:  
{\it Markov chains and mixing times}. 
Amer. Math. Soc., Providence, (2009).

\bibitem{LubetzkySly13}
Lubetzky, E., Sly, A.:
Cutoff for the Ising model on the lattice.
{\it Invent. Math.} {\bf 191}, no. 3, (2013), 719-755.

\bibitem{LPPPV98}
L'vov, V., Podivilov, E., Pomyalov, A., Procaccia,I., Vandembroucq, D.: 
Improved shell model of turbulence.
{\it Phys. Rev. E} {\bf 58}, no. 2, (1998), 1811-1822.

\bibitem{MM14}
Manna, U., Mohan, M.:  
Large deviations for the Shell model of turbulence perturbed by 
L\'evy noise. 
{\it Commun. Stoch. Anal.} {\bf 7}, no. 1, (2013), 39-63.

\bibitem{MSV07} 
Mattingly, J., Suidan, T., Vanden-Eijnden, E.:  
Anomalous dissipation in a stochastically forced infinite-dimensional system of coupled oscillators.  
{\it J. Stat. Phys.} {\bf 128}, no. 5, (2007), 1145-1152.

\bibitem{MSV07b} 
Mattingly, J., Suidan, T., Vanden-Eijnden, E.:  
Simple systems with anomalous dissipation and energy cascade. 
{\it Comm. Math. Phys.} {\bf 276}, no. 1, (2007), 189-220. 

\bibitem{Meliot14}
M\'eliot, P.: 
The cut-off phenomenon for Brownian motions on compact symmetric spaces.
{\it Potential Anal.} {\bf 40}, no. 4, (2014), 427-509.

\bibitem{Pego16}
Murray, R., Pego, R.:  
Algebraic decay to equilibrium for the Becker-D\"oring equations.  
{\it SIAM J. Math. Anal.} {\bf 48}, no. 4, (2016), 2819-2842.

\bibitem{Pego17}
Murray, R., Pego, R.:  
Cutoff estimates for the Becker-D\"oring equations.   
{\it Commun. Math. Sci.} {\bf 15},  no, 6, (2017), 1685-1702.

\bibitem{Oh}
Oh, S., Kais, S.: 
Cutoff phenomenon and entropic uncertainty for random quantum circuits. 
{\it Electron. Struct} {\bf 5}, no. 3, (2023), 035004, 7 pp.

\bibitem{OY87}
Ohkitani, K., Yamada, M.: 
Lyapunov spectrum of a chaotic model of three-dimensional turbulence. 
{\it J. Phys. Soc. Jpn.} {\bf 56}, no. 12, (1987), 4210-4213.

\bibitem{Panaretos}
Panaretos, V., Y. Zemel, Y.:
{\it An invitation to statistics in Wasserstein space}.
SpringerBriefs in Probability and Mathematical Statistics, (2020).

\bibitem{PS08}
Pavlyukevich, I., Sokolov, I.: 
One-dimensional space-discrete transport subject to L\'evy perturbations.  
{\it J. Stat. Phys.} {\bf 133}, no. 1, (2008), 205-215.

\bibitem{Sa} 
Sato, K.: 
{\it L{\'e}vy processes and infinitely divisible distributions}. 
Cambridge University Press, (1999).

\bibitem{Scarabotti} 
Scarabotti, F.: 
Time to reach stationarity in the Bernoulli-Laplace diffusion model with many urns.
{\it Adv. in Appl. Math.} {\bf 18}, no. 3, (1997), 351-371.

\bibitem{Stone} 
Stone, M. H.:
On one-parameter unitary groups in Hilbert space.
{\it Ann. of Math.} {\bf 33}, no. 3, (1932), 643-648.



\bibitem{Vernier20}
Vernier, E.:  
Mixing times and cutoffs in open quadratic fermionic systems. 
{\it SciPost Phys.} {\bf 9}, no. 049, (2020), 1-30.

\bibitem{Vi09}
Villani, C.: 
{\it Optimal transport, old and new}. 
Springer, (2009).



\bibitem{Yc99} 
Ycart, B.:  
Cutoff for samples of Markov chains. 
{\it ESAIM, Probab. Stat.} {\bf 3}, (1999), 89-106. 

\end{thebibliography}
\end{document}